\theoremstyle{definition}
\newtheorem{theorem}{Theorem}
\newtheorem{definition}[theorem]{Definition}
\newtheorem{proposition}[theorem]{Proposition}
\newtheorem{lemma}[theorem]{Lemma}
\begin{document}

\title{Magic States: Cheaper than Lattice Surgery CNOT gates} 
\title{Efficient Magic State Cultivation on $\mathbb{RP}^2$}
\author{Zi-Han Chen}

\email{czh007@mail.ustc.edu.cn}

\affiliation{
Hefei National Research Center for Physical Sciences at the Microscale and School of Physical Sciences,
University of Science and Technology of China, Hefei 230026, China}
\affiliation{
Shanghai Research Center for Quantum Science and CAS Center for Excellence in Quantum Information and Quantum Physics,
University of Science and Technology of China, Shanghai 201315, China}
\affiliation{
Hefei National Laboratory, University of Science and Technology of China, Hefei 230088, China}

\author{Ming-Cheng Chen}

\affiliation{
Hefei National Research Center for Physical Sciences at the Microscale and School of Physical Sciences,
University of Science and Technology of China, Hefei 230026, China}
\affiliation{
Shanghai Research Center for Quantum Science and CAS Center for Excellence in Quantum Information and Quantum Physics,
University of Science and Technology of China, Shanghai 201315, China}
\affiliation{
Hefei National Laboratory, University of Science and Technology of China, Hefei 230088, China}

\author{Chao-Yang Lu}

\affiliation{
Hefei National Research Center for Physical Sciences at the Microscale and School of Physical Sciences,
University of Science and Technology of China, Hefei 230026, China}
\affiliation{
Shanghai Research Center for Quantum Science and CAS Center for Excellence in Quantum Information and Quantum Physics,
University of Science and Technology of China, Shanghai 201315, China}
\affiliation{
Hefei National Laboratory, University of Science and Technology of China, Hefei 230088, China}

\author{Jian-Wei Pan}

\affiliation{
Hefei National Research Center for Physical Sciences at the Microscale and School of Physical Sciences,
University of Science and Technology of China, Hefei 230026, China}
\affiliation{
Shanghai Research Center for Quantum Science and CAS Center for Excellence in Quantum Information and Quantum Physics,
University of Science and Technology of China, Shanghai 201315, China}
\affiliation{
Hefei National Laboratory, University of Science and Technology of China, Hefei 230088, China}

\date{\today}

\begin{abstract}
Preparing high-fidelity logical magic states is crucial for fault-tolerant quantum computation. Among prior attempts to reduce the substantial cost of magic state preparation, magic state cultivation (MSC)~\cite{gidney_magic_2024}, a recently proposed protocol for preparing $\mathrm{T}$ states without magic state distillation, achieves state-of-the-art efficiency. Inspired by this work~\cite{gidney_magic_2024}, we propose a new MSC procedure that would produce a logical $\mathrm{T}$ state on a rotated surface code at a further reduced cost. For our MSC protocol, we define a new code family, the $\mathbb
{RP}^2$ code, by putting the rotated surface code on $\mathbb{RP}^2$ (a two-dimensional manifold), as well as two self-dual CSS codes named SRP-3 and SRP-5 respectively. Small $\mathbb{RP}^2$ codes are used to hold logical information and checked by syndrome extraction (SE) circuits. We design fast morphing circuits that enable switching between a distance 3 (5) $\mathbb{RP}^2$ code and an SRP-3 (SRP-5) code on which we can efficiently check the correctness of the logical state. To preserve the high accuracy of the cultivated logical $\mathrm{T}$ state, we design an efficient and easy-to-decode expansion stage that grows a small $\mathbb{RP}^2$ code to a large rotated surface code in one round. Our MSC protocol utilizes non-local connectivity, available on both neutral atom array and ion trap platforms.  According to our Monte Carlo sampling results, our MSC protocol requires about an order of magnitude smaller space-time volume to reach a target logical error rate around $10^{-9}$ compared to the original MSC protocol. 
\end{abstract}

\maketitle		

Fault-tolerant quantum computation (FTQC) is necessary, given noisy physical operations, for implementing quantum algorithms with arbitrarily high precision. The rotated surface code~\cite{bravyi_quantum_1998,dennis_topological_2002,horsman_surface_2013}, with high circuit noise threshold $\sim 1\%$, is promising for practical implementation of FTQC and admits experimental demonstration definitively below threshold~\cite{google_quantum_ai_and_collaborators_quantum_2025}. Tremendous theoretical progress on surface code-based logical processing has been made over the last two decades and includes constructing and optimizing logical Clifford operations~\cite{fowler_surface_2012,horsman_surface_2013,litinski_game_2019,bombin_logical_2023,beverland_fault_2024,geher_error-corrected_hadamard_2024,brown_poking_2017,gidney_inplace_2024,moussa_transversal_2016,chen_transversal_2024} and magic state preparations~\cite{bravyi_universal_2005,fowler_bridge_2012,fowler_surface_2012,li_2015,litinski_game_2019,fowler_low_2019,gidney_efficient_ccz_catalyzed_2019,gidney_flexible_autoccz_2019,itogawa_even_2024,hirano_leveraging_2024} (magic states are used to implement non-Clifford gates via gate teleportation~\cite{gottesman_demonstrating_1999,zhou_methodology_2000}). The substantial cost of preparing magic states (with low logical error rates under a practically relevant physical noise rate $\sim 10^{-3}$) has long persisted~\cite{fowler_bridge_2012,fowler_surface_2012,litinski_game_2019,fowler_low_2019,gidney_efficient_ccz_catalyzed_2019,gidney_flexible_autoccz_2019,itogawa_even_2024,hirano_leveraging_2024,vasmer_2019_3d_surface_code,chamberland_very_2020,heussen2024efficientfaulttolerantcodeswitching,lee_low-overhead_2024} until the recent proposal of  a new $\mathrm{T}$ state preparation protocol, magic state cultivation (MSC)~\cite{gidney_magic_2024}, that can produce high-fidelity $\mathrm{T}$ states with roughly the same space-time cost as performing lattice surgery CNOT gates of the same fidelity.    

The MSC procedure in~\cite{gidney_magic_2024} cultivates a high-fidelity $\mathrm{T}$ state on a small distance 3 or distance 5 color code, which is then ``grafted'' to a larger matchable code (essentially a rotated surface code deformed at a corner). While the cultivation process (on small color codes) is highly efficient thanks to a new technique, logical double-checking, developed in~\cite{gidney_magic_2024}, the ``grafting'' process (also novel) incurs extra post-selection and syndrome extraction (SE) rounds. Furthermore, for physical systems such as neutral atom array~\cite{bluvstein_quantum_2022} and ion trap~\cite{moses_race_track_2023} platforms that support non-local connectivity, it would be more convenient to prepare $\mathrm{T}$ states on regular surface codes, since these $\mathrm{T}$ states can be teleported~\cite{gottesman_demonstrating_1999,zhou_methodology_2000} more efficiently via transversal CNOTs (instead of lattice surgery CNOTs) to apply logical $\mathrm{T}$ gates on other regular surface codes. Recently, another MSC procedure~\cite{vaknin_magic_2025} is proposed and has a more efficient expansion stage, which grows a smaller surface code to a larger surface code. However, this procedure does not have an efficient cultivation process as the original MSC procedure and overall is less efficient than the original MSC procedure. 

\begin{figure}
    \centering
    \includegraphics[width=\columnwidth]{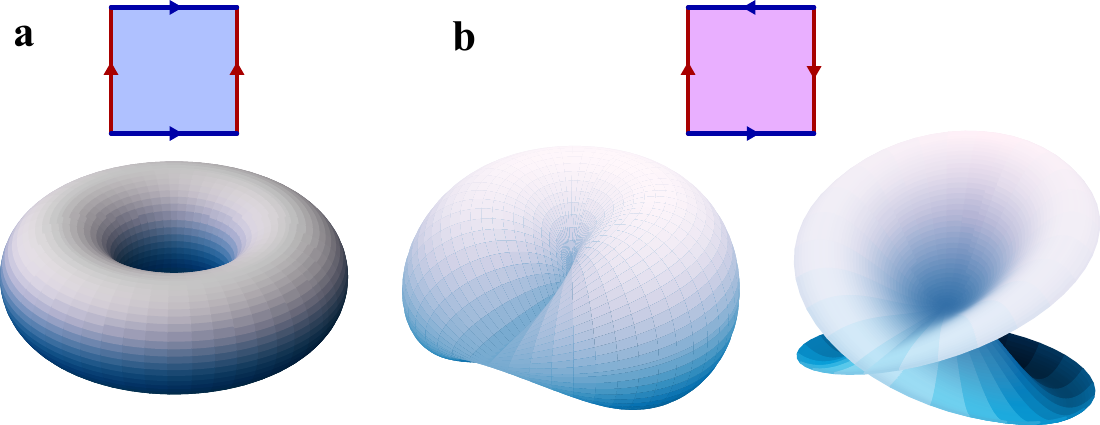}
    \caption{Torus and $\mathbb{RP}^2$ obtained by identifying the sides of a square. Red (Blue) arrows on the sides of a square patch are identified. (\textbf{a}) Illustration of the torus. (\textbf{b}) Illustration of the $\mathbb{RP}^2$. The shape (representing the $\mathbb{RP}^2$) in the middle is obtained from the shape on the right by gluing together its upper and lower rim.}
    \label{fig: torus and rp2}
\end{figure}

In this work, we propose a new MSC protocol that simultaneously has an efficient cultivation process and an efficient and easy-to-decode expansion and stabilization (ES) process. Moreover, our MSC protocol uses non-local connectivity and produces $\mathrm{T}$ states on regular rotated surface codes. For our MSC protocol, we design a new code family, the $\mathbb{RP}^2$ code, and two self-dual CSS codes, named SRP-3 and SRP-5 respectively.  The $\mathbb{RP}^2$ code is a variant of the rotated surface code and is constructed by placing the rotated surface code on $\mathbb{RP}^2$. See Fig.~\ref{fig: torus and rp2} for an illustration of the torus, on which the toric code~\cite{kitaev_fault-tolerant_2003} (the predecessor of the surface code) is defined, and the $\mathbb{RP}^2$. For our MSC protocol, we use the $\mathbb{RP}^2$-3 code (with distance 3) and the $\mathbb{RP}^2$-5 code (with distance 5) in the $\mathbb{RP}^2$ code family. Due to their similarity to rotated surface codes, we design an efficient expansion circuit that grows an $\mathbb{RP}^2$-3 ($\mathbb{RP}^2$-5) code to a larger regular rotated surface code in a single round. To also obtain an efficient cultivation process that grows a high-fidelity $\mathrm{T}$ state on an $\mathbb{RP}^2$-3 ($\mathbb{RP}^2$-5) code, we construct fast morphing circuits that morphs the $\mathbb{RP}^2$-3 ($\mathbb{RP}^2$-5) code to an SRP-3 (SRP-5) code, on which we can implement logical double-checking similar to~\cite{gidney_magic_2024}, and then back. According to Monte Carlo sampling results, our MSC protocol requires nearly an order of magnitude smaller space-time cost to reach a logical error rate around $1.5\times 10^{-6}$ or $1\times10^{-9}$ than the original MSC protocol~\cite{gidney_magic_2024}. We overview our MSC protocol and results in the following section (Sec.~\ref{sec: results}), at the end of which the structure of the rest of the paper is laid out. 

\section{Results}\label{sec: results}
\begin{figure}
    \centering
    \includegraphics[width=\columnwidth]{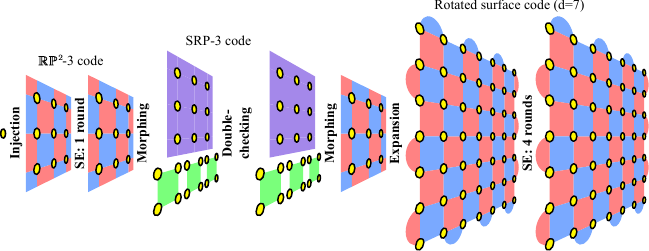}
    \caption{The MSC-3 protocol. Data qubits are marked by yellow circles. Red (blue) tiles represent Pauli $\mathrm{X}$ ($\mathrm{Z}$) stabilizers. Each stage of the protocol is illustrated between two code patches. A few stabilizers of the $[[15,1,3]]$ SRP-3 code are illustrated by green squares, each of which represents an $\mathrm{X}$ stabilizer \textit{and} a $\mathrm{Z}$ stabilizer. Not all stabilizers of the SRP-3 code are illustrated. Also, within each slice of SRP-3 code patches, some of the data qubits on the green squares are identified with qubits on the purple patch. The final SE round on the rotated surface code is implemented by a perfect SE circuit followed by a perfect logical readout.}
    \label{fig: MSC-3 procedure}
\end{figure}

\begin{figure*}
    \centering
    \includegraphics[width=\textwidth]{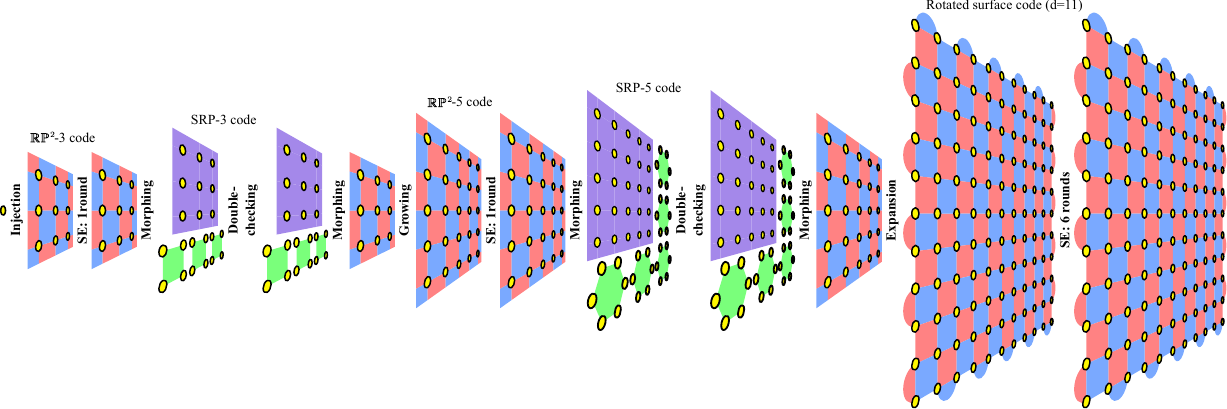}
    \caption{The MSC-5 protocol. Illustration convention follows from Fig.~\ref{fig: MSC-3 procedure}. A few stabilizers of the $[[35,1,5]]$ SRP-5 code are illustrated by green hexagons, each of which represents an $\mathrm{X}$ stabilizer \textit{and} a $\mathrm{Z}$ stabilizer. Not all stabilizers of the SRP-5 code are illustrated. Also, within each slice of SRP-5 code patches, some of the data qubits on the green hexagons are identified with qubits on the purple patch. The final SE round on the rotated surface code is implemented by a perfect SE circuit followed by a perfect logical readout.}
    \label{fig: MSC-5 procedure}
\end{figure*}

Our MSC protocol has two versions, a smaller version named MSC-3 (Fig.~\ref{fig: MSC-3 procedure}) and a full-sized version named MSC-5 that can reach lower logical error rates (Fig.~\ref{fig: MSC-5 procedure}). Both versions contain two processes, a cultivation process for producing a high-fidelity T state on a smaller code and an ES process for transferring the logical qubit from the smaller code to a larger rotated surface code. During the cultivation process of the MSC-3 (MSC-5) procedure, a single qubit $\mathrm{T}$ state is grown (gradually grown) to an $\mathbb{RP}^2$-3 ($\mathbb{RP}^2$-5) code with distance $3$ ($5$). SE rounds as well as logical double-checks are applied at the intervals of code growths to improve the fault distance as in~\cite{gidney_magic_2024}. Interestingly, these SE rounds are only performed on $\mathbb{RP}^2$ codes and the logical double-checks are only performed on the SRP-3 or the SRP-5 code. Each MSC attempt would be discarded if any detector during the cultivation process is triggered. During the ES process of our MSC-3 (MSC-5) protocol, we first expand the $\mathbb{RP}^2$-3 ($\mathbb{RP}^2$-5) code to a distance 7 (distance 11) rotated surface code; then apply a few SE rounds on the rotated surface code (Fig.~\ref{fig: MSC-3 procedure},~\ref{fig: MSC-5 procedure}). Similar to~\cite{gidney_magic_2024,vaknin_magic_2025}, at the end of the ES process, another round of post-selection, which we call the final-stage post-selection, is implemented to discard MSC attempts that survive the cultivation process but may contain troublesome errors which would significantly degrade the logical fidelity. In~\cite{gidney_magic_2024,vaknin_magic_2025}, a soft output named complementary gap is calculated for each decoding shot and is used to decide whether the shot should be discarded for the final-stage post-selection. In our work, we adopt a different method proposed in~\cite{meister_efficient_2024} for obtaining soft outputs. In particular, we extract a \textit{two-dimensional soft output} $(\phi_{\mathbb{RP}^2},\phi_{bd})\in\mathbb{Z}_{\geq 0}^2$ for each shot and set cutoffs on both $\phi_{\mathbb{RP}^2}$ and $\phi_{bd}$ for final-stage post-selection. Exact simulation of our MSC protocols is inefficient due to the presence of $\mathrm{T}$ and $\mathrm{T}^{\dagger}$ gates. Following~\cite{gidney_magic_2024}, we instead replace all $\mathrm{T}$ ($\mathrm{T}^{\dagger}$) gates in our protocols with $\mathrm{S}$ ($\mathrm{S}^{\dagger}$) gates and perform Monte Carlo sampling on such altered circuits (cultivating $\mathrm{S}|+\rangle$ states instead of $\mathrm{T}$ states) under uniform depolarizing circuit noise~\cite{gidney_magic_2024} with noise rate $p=0.001$. Thanks to our highly efficient ES process, our MSC protocol has higher end-to-end survival rates (Fig.~\ref{fig: MSC results}(\textbf{a}) and Tab.~\ref{tab: MSC results comparison}) and requires a smaller code size at the end than the original MSC protocol~\cite{gidney_magic_2024} to achieve the same logical error rates. As shown in Fig.~\ref{fig: MSC results}(\textbf{b}), our MSC-3 (MSC-5) protocol requires nearly an order of magnitude smaller space-time cost to achieve a logical error rate around $1.5\times 10^{-6}$ ($1\times10^{-9}$) compared to the original MSC protocol~\cite{gidney_magic_2024}. 

\begin{figure*}
    \centering
    \includegraphics[width=\textwidth]{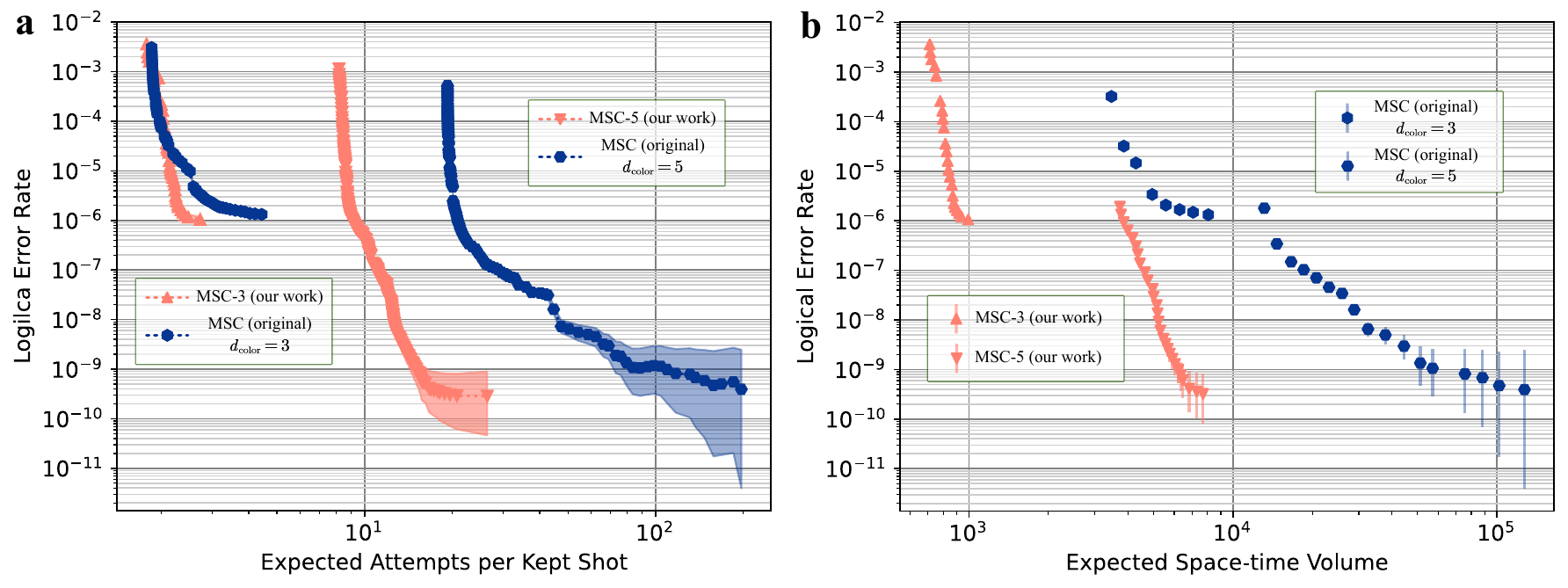}
    \caption{Monte Carlo sampling results under uniform depolarizing circuit noise~\cite{gidney_magic_2024} with noise rate $p=0.001$ for our MSC protocol and the original MSC protocol (taken directly from~\cite{gidney_magic_2024}). Here $d_{\text{color}}$ is the distance of the color code right before the ``grafting'' process in the original MSC protocol~\cite{gidney_magic_2024}. (\textbf{a}) Trade-offs between the logical error rate and the expected number of attempts per kept shot. (\textbf{b}) Expected space-time cost to reach a certain logical error rate. Logical error rates are obtained via maximum likelihood estimation over collected samples~\cite{gidney_sinter}. Error bars and shaded regions both indicate logical error rate hypotheses whose likelihood is within a factor 1000 of the maximum likelihood. }
    \label{fig: MSC results}
\end{figure*}

\begin{table*}
    \centering
    \begin{tabular}{||c|c|c|c|c|c|c||}
         \hline
         \hline
         MSC protocol &  Logical Error rate & Discard Rate & Logical Error rate & Discard Rate & Final distance & Footprint \\
                      &      (Ungrown)      & (Ungrown)    &    (End-to-end)    & (End-to-end) & (End-to-end) & (End-to-end) \\
         \hline
         MSC-3 (this work)& $8.3\times 10^{-7}$ & $48.6\%$ & $1.5\times 10^{-6}$ & $58\%$ & 7 & 103 qubits \\
         \hline
         MSC ($d_{\text{color}}=3$)~\cite{gidney_magic_2024} & $3.4\times 10^{-7}$ & $31.3\%$ & $1.5\times 10^{-6}$ & $74\%$ & 15 & 454 qubits \\
         \hline
         MSC-5 (this work) & $3.6\times 10^{-10}$ & $90.5\%$ & $1\times 10^{-9}$ & $93.4\%$ & 11 & 251 qubits \\
         \hline
         MSC ($d_{\text{color}}=5$)~\cite{gidney_magic_2024} & $3.5\times 10^{-10}$ & $85.6\%$ & $1\times 10^{-9}$ & $98.9\%$ & 15 & 463 qubits \\
         \hline
         \hline
    \end{tabular}
    \caption{Selected Monte Carlo sampling results under uniform depolarizing circuit noise~\cite{gidney_magic_2024} with noise rate $p=0.001$. Here $d_{\text{color}}$ is the distance of the color code right before the ``grafting'' process in the original MSC protocol~\cite{gidney_magic_2024}. Performance of each cultivation process alone (which only produces $\mathrm{T}$ states on a small code) is labeled by `Ungrown'. Benchmarking the cultivation process alone is implemented by sampling its Clifford variant, which cultivates $\mathrm{S}|+\rangle$ on the small code, appended by a round of perfect SE and logical readout. Full post-selection (discarding the shots that trigger any detector) is used to obtain the logical error rate and discard rate for each ungrown case. Data corresponding to a full MSC protocol are labeled by `End-to-end'. The end-to-end results are selected from the results in Fig.~\ref{fig: MSC results}. The distance of the code at the end of an MSC protocol is denoted as the final distance. Footprint of an MSC protocol is the maximum number of active qubits throughout the protocol. Our cultivation process alone
 is less efficient than the one in the original MSC protocol. However, thanks to our more efficient ES process, our end-to-end efficiency is higher than the original MSC protocol. }
    \label{tab: MSC results comparison}
\end{table*}

We will elaborate on our MSC protocols and their decoding in the rest of the paper, which is structured as follows. In Sec.~\ref{sec: surface codes on rp2}, we describe the design of $\mathbb{RP}^2$ codes and their SE circuits. In Sec.~\ref{sec: to and fro morphing}, we first present a general framework of transforming a CSS code with fold-duality (a special case of ZX-duality~\cite{breuckmann_fold-transversal_2024}) to a self-dual CSS code. As $\mathbb{RP}^2$ codes have fold-duality, we then define two self-dual codes, named SRP-3 and SRP-5 respectively, and describe how to morph an $\mathbb{RP}^2$-3 ($\mathbb{RP}^2$-5) code to an SRP-3 (SRP-5) code and back. (We will also use SRP to refer to either SRP-3 or SRP-5.) Logical double-checking on an SRP code is described in Sec.~\ref{sec: logical double checking}. Circuits for code growths including injection and expansion are illustrated in Sec.~\ref{sec: injection and growth}. We describe how \textit{two-dimensional soft outputs} are obtained during decoding in Sec.~\ref{sec: final stage post selection}. Finally, we discuss our results in Sec.~\ref{sec: discussion and outlook}.

\section{Surface codes on \texorpdfstring{$\mathbb{RP}^2$}{RP2}}\label{sec: surface codes on rp2}

In this section, we define the $\mathbb{RP}^2$ (surface) code as a variant family of  the planar (rotated) surface code by placing it on $\mathbb{RP}^2$. We note that there is another work~\cite{kobayashi_cross-cap_2024} by Kobayashi et al. that also considered various ways to put the surface code on $\mathbb{RP}^2$. Our code construction is similar to but different from theirs. As per our construction, an $\mathbb{RP}^2$ code with patch parameter $d$ (an odd positive integer), named $\mathbb{RP}^2$-$d$, has code parameters $[[d^2,1,d]]$, just as a rotated surface code with the same patch parameter. For example, stabilizers of the $\mathbb{RP}^2$-$3$ code are illustrated in Fig.~\ref{fig: surface codes on rp2}(\textbf{a}), where the stabilizers in the bulk are the same as the ones in a distance 3 rotated surface code.  More interestingly, each antipodal pair of red (blue) tiles along the boundary in Fig.~\ref{fig: surface codes on rp2}(\textbf{a}) represents a single $\mathrm{X}$ ($\mathrm{Z}$) stabilizer. As shown in Fig.~\ref{fig: surface codes on rp2}(\textbf{b}-\textbf{c}), the logical $\mathrm{X}$ ($\mathrm{Z}$) operator can be hosted on any non-contractable loop on $\mathbb{RP}^2$. More generally, the $\mathbb{RP}^2$-$d$ code can be similarly defined with $d$ being the number of qubits on a side of the square lattice (of data qubits) as in Fig.~\ref{fig: surface codes on rp2}(\textbf{a}).   

For $\mathbb{RP}^2$ codes, we can design SE circuits with depth 5 (only counting the layers of CNOT gates per SE round). As we are particularly interested in the $\mathbb{RP}^2$-$3$ code and the $\mathbb{RP}^2$-$5$ code in the family, their SE circuits are illustrated in Fig~\ref{fig: surface codes on rp2}(\textbf{d}-\textbf{e}). It is worth noticing that the orientation of `zigzag' gate orderings for $\mathrm{X}$ stabilizer measurements in the bulk is aligned with the logical $\mathrm{X}$ operator in Fig.~\ref{fig: surface codes on rp2}(\textbf{b}). While such alignment would halve the circuit-level distance for a rotated surface code (which has open boundaries), in our case, we verified our SE circuit for an $\mathbb{RP}^2$-$d$ code (which does not have a boundary) achieves full circuit-level distance $d$ for all $d\in\{3,5,7,9,11\}$ using \texttt{stim.Circuit.shortest\_graphlike\_error}~\cite{gidney2021stim}.

\begin{figure}[h]
    \centering
    \includegraphics[width=\columnwidth]{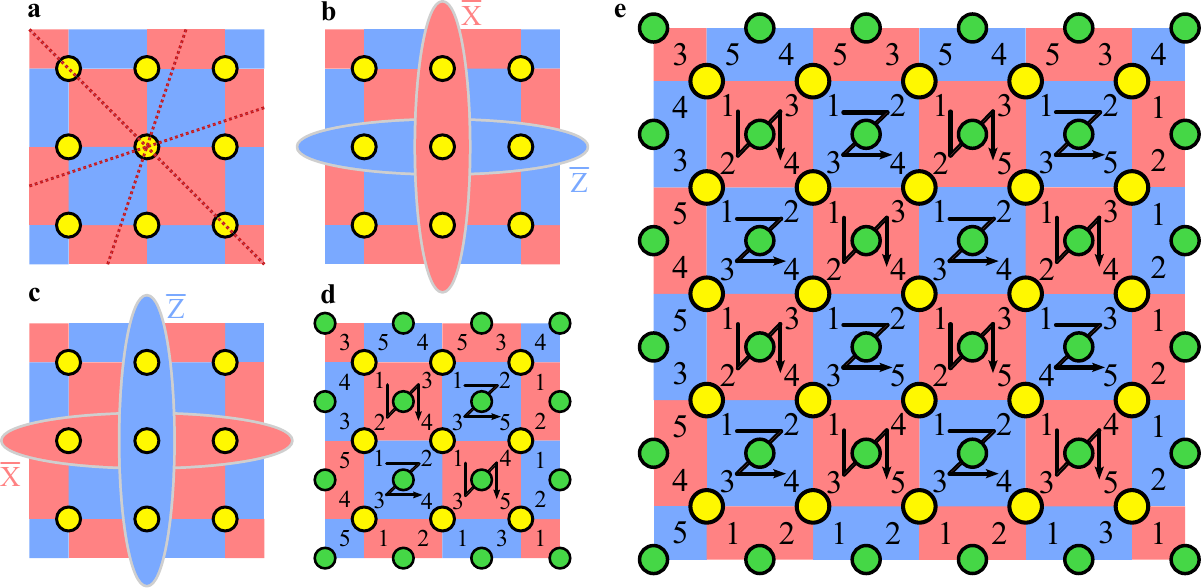}
    \caption{The $\mathbb{RP}^2$ code. (\textbf{a}) Stabilizers of the $\mathbb{RP}^2$-3 code with code parameters $[[9,1,3]]$. Data qubits are marked by yellow circles. Red (blue) tiles indicate Pauli $\mathrm{X}$ ($\mathrm{Z}$) stabilizers. Each pair of antipodal red tiles (connected by red dotted lines) or blue tiles along the boundary are combined together to form a single stabilizer. (\textbf{b}-\textbf{c}) Different representations of the logical $\mathrm{X}$ and $\mathrm{Z}$ operators. (\textbf{d}-\textbf{e}) $\mathrm{CNOT}$ gate orderings of the SE circuits for the $\mathbb{RP}^2$-3 code and the $\mathbb{RP}^2$-5 code respectively. Green circles represent ancilla qubits used during SE. On the boundary, each antipodal pair of ancilla qubits are identified as a single ancilla qubit.    }
    \label{fig: surface codes on rp2}
\end{figure}

\section{To-and-Fro morphing between an \texorpdfstring{$\mathbb{RP}^2$}{RP2}-3 (\texorpdfstring{$\mathbb{RP}^2$}{RP2}-5) and an SRP-3 (SRP-5) code}\label{sec: to and fro morphing}

In the original MSC protocol~\cite{gidney_magic_2024}, the logical $\mathrm{H}_{\mathrm{XY}}$ operator of a $(6,6,6)$-color code is double-checked to detect logical errors. Since $\mathrm{T}|+\rangle$ is the $+1$ eigenstate of $\mathrm{H}_{\mathrm{XY}}$, a $-1$ measurement result of the logical $\mathrm{H}_{\mathrm{XY}}$ signals the presence of errors. On a $(6,6,6)$-color code, the double-checking of logical $\mathrm{H}_{\mathrm{XY}}$ can be done efficiently as it admits a `single-qubit' transversal implementation, meaning that there is a tensor product of single-qubit gates on the data qubits implementing the logical $\mathrm{H}_{\mathrm{XY}}$. For our MSC protocol, we would also like to perform the double-checking of logical $\mathrm{H}_{\mathrm{XY}}$ on our logical qubit. To achieve this, we can first morph the $\mathbb{RP}^2$-3 ($\mathbb{RP}^2$-5) code that holds our logical qubit to an self-dual CSS code (Def.~\ref{def: css codes},~\ref{def: hadamard conjugation},~\ref{def: self dual css codes}) with exactly one logical qubit as such code also admits a `single-qubit' transversal logical $\mathrm{H}_{\mathrm{XY}}$ gate. We can then perform the logical double-checking~\cite{gidney_magic_2024} on the self-dual CSS code. Finally, we can get our logical qubit back on the original $\mathbb{RP}^2$-3 (or $\mathbb{RP}^2$-5) code by morphing back from the self-dual CSS code.  

In this section, we will construct two self-dual CSS codes, the SRP-3 code and the SRP-5 code, as well as shallow circuits that can quickly morph an $\mathbb{RP}^{2}$-3 ($\mathbb{RP}^2$-5) code to an SRP-3 (SRP-5) code and back. Notably, these constructions crucially rely on the fold-duality (Def.~\ref{def: fold duality}) of the $\mathbb{RP}^2$ code. We first present a more general framework for constructing a self-dual CSS code from a fold-dual one in Sec~\ref{subsec: transforming fold-dual codes into self-dual codes via concatenation}. We return to discuss the SRP-3 and the SRP-5 code and the morphing circuit construction in Sec~\ref{subsec: SRP code construction}.

\begin{definition}[CSS codes]\label{def: css codes}
    A stabilizer code $\mathcal{C}$ is a Calderbank-Shor-Steane (CSS) code if $\mathcal{C}$ has a stabilizer generating set in which each element is either a Pauli $\mathrm{X}$ string operator or a Pauli $\mathrm{Z}$ string operator. In the following, we denote the number of data qubits of $\mathcal{C}$ by $|\mathcal{C}|$. We also label the data qubits by $1, 2, \cdots, |\mathcal{C}|$ and use qubits synonymously with their labels. The set of all data qubits is denoted by $[|\mathcal{C}|]:=\{1,2,\cdots,|\mathcal{C}|\}$.  
\end{definition}
\begin{definition}[Hadamard conjugated Pauli string opeartors]\label{def: hadamard conjugation}
For a set $\mathcal{Q}$ of qubits, define the transversal $\mathrm{H}$ gate on $\mathcal{Q}$ as \[
\mathrm{H}_{\mathcal{Q}} := \prod_{i\in\mathcal{Q}}\mathrm{H}_{i},
\] where $\mathrm{H}_i$ is the single-qubit Hadamard gate on qubit $i$. Denote the conjugation of $\mathrm{P}$ by the transversal $\mathrm{H}$ gate on the support of $\mathrm{P}$ as $\mathsf{h}(\mathrm{P}):=\mathrm{H}_{\mathrm{supp}(\mathrm{P})}\mathrm{P}\mathrm{H}_{\mathrm{supp}(\mathrm{P})}^{\dagger}$, where $\mathrm{supp}(\mathrm{P})$ (the support of $\mathrm{P}$) is the set of qubits on each of which $\mathrm{P}$ has a non-identity element. 
\end{definition}
\begin{definition}[Self-dual CSS codes (with even distance)] \label{def: self dual css codes}
    A CSS code $\mathcal{C}$ is said to be self-dual if for each $\mathrm{X}$ ($\mathrm{Z}$) stabilizer $\mathrm{S}$, $\mathsf{h}(\mathrm{S})$ is a $\mathrm{Z}$ ($\mathrm{X}$) stabilizer. A self-dual CSS code $\mathcal{C}$ is said to have \textit{even distance} if $\mathcal{C}$ encodes at least one logical qubit and any X or Z logical operator of $\mathcal{C}$ has even weight. (This definition is natural in the sense that any X (Z) logical operator $\mathrm{L}$ of $\mathcal{C}$ deformed by any X (Z) stabilizer has the same weight parity as $\mathrm{L}$, as a result of the self-duality.) 
\end{definition}
\begin{definition}[Fold-duality (a special case of ZX-dualities defined in~\cite{breuckmann_fold-transversal_2024})] \label{def: fold duality}
     Consider a CSS code $\mathcal{C}$. A \textit{Folding} operation $\sigma$ on $\mathcal{C}$ is defined as an order-2 permutation of code qubits. More precisely, $\sigma$ is defined as a member of the symmetric group $\mathrm{Sym}([|\mathcal{C}|])$ with $\sigma^2=1$. Define the \textit{fold line} of $\sigma$ as $inv(\sigma):=\{i=\sigma(i)|i\in [|\mathcal{C}|]\}$, the set of qubits invariant under $\sigma$.  The unitary operator $\hat{\sigma}$ that implements the \textit{folding} $\sigma$ is simply the product of SWAP gates on each pair of qubits that form a size 2 orbit under $\sigma$:
    \[
   \hat{\sigma}:= \prod_{\substack{(i,\sigma(i)> i), \\i\in[|\mathcal{C}|]}}  \mathrm{SWAP}_{i,\sigma(i)}.
    \]
For a Pauli string operator $\mathrm{P}$, denote its conjugation by $\hat{\sigma}$ as $\hat{\sigma}(\mathrm{P}):=\hat{\sigma}\mathrm{P}\hat{\sigma}^{\dagger}$. 
$\mathcal{C}$ is said to be fold-dual (or to have fold-duality) if there exists a \textit{folding} $\sigma$ on $\mathcal{C}$ that satisfies the following condition: 
\begin{itemize}
\item for each $\mathrm{X}$ ($\mathrm{Z}$) stabilizer $\mathrm{S}$ of $\mathcal{C}$, $\mathsf{h}(\hat{\sigma}(\mathrm{S}))$ is a $\mathrm{Z}$ ($\mathrm{X}$) stabilizer of $\mathcal{C}$. 
\end{itemize}
Conversely, given a fold-dual code $\mathcal{C}$, if a \textit{folding} $\sigma$ on $\mathcal{C}$ satisfies the above condition, we say $\sigma$ is an \textit{associated folding} of $\mathcal{C}$. (Note that $\mathcal{C}$ may have more than one \textit{associated folding}.) 
 \end{definition}

\subsection{Turning fold-dual CSS codes into self-dual ones via concatenation}\label{subsec: transforming fold-dual codes into self-dual codes via concatenation}
Inspired by the previous observation that color codes can be constructed from surface codes~\cite{bombin_universal_2012,kubica_unfolding_2015,criger_concatenated_toric_code_2016} and in particular by the code concatenation perspective in~\cite{criger_concatenated_toric_code_2016}, we will first present a more general recipe for transforming a fold-dual CSS code into a self-dual CSS code by concatenating the fold-dual code with a series of (small) even-distance self-dual CSS codes (Def.~\ref{def: self dual css codes}). Then, to illustrate the recipe more concretely, we return to the examples in~\cite{criger_concatenated_toric_code_2016} where color codes can be constructed by concatenating fold-dual surface codes with a series of $[[6,4,2]]$ and/or $[[4,2,2]]$ codes.

\begin{figure}[h]
    \centering
    \includegraphics[width=\columnwidth]{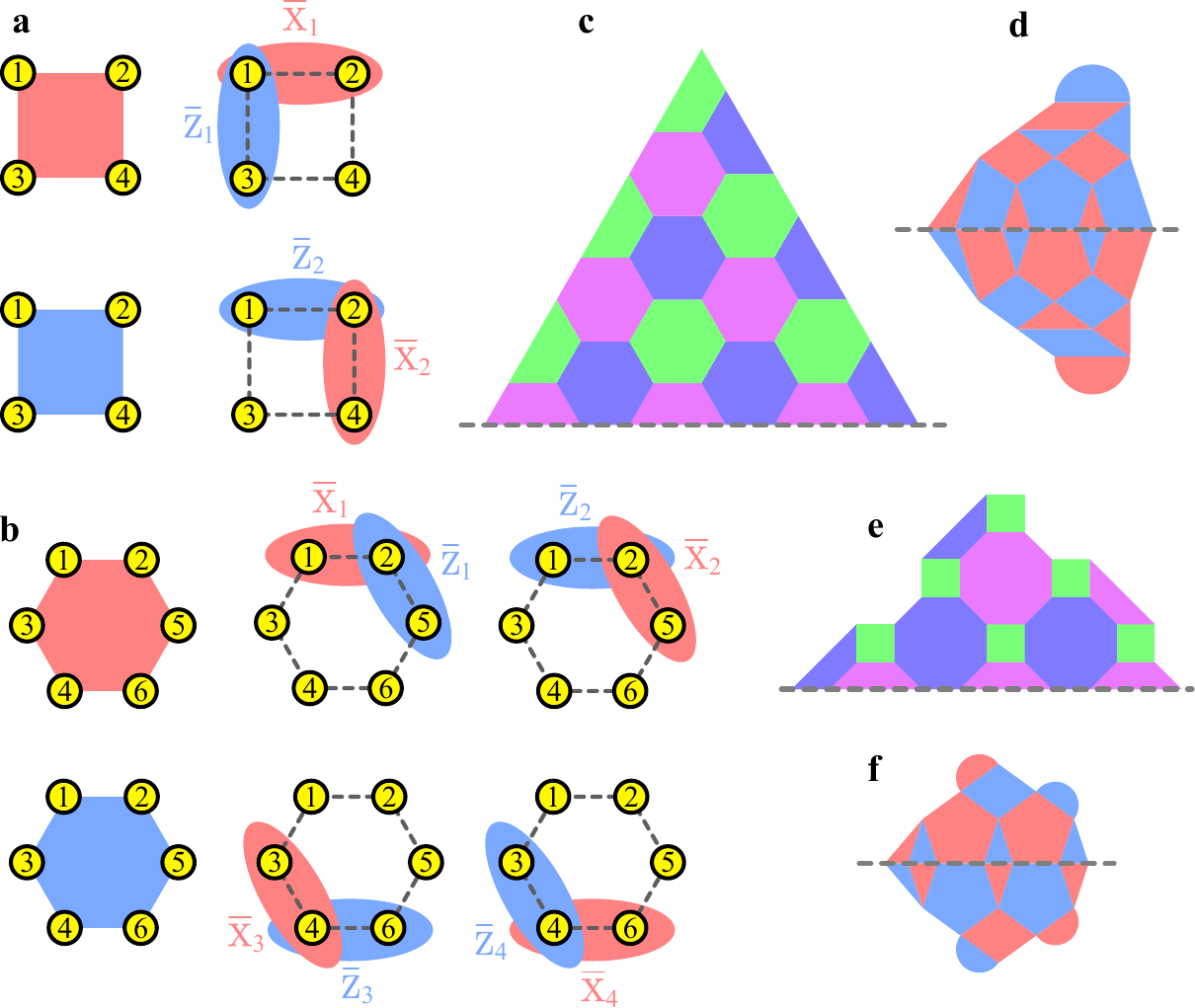}
    \caption{Small even-distance self-dual CSS codes and constructing color codes from surface codes. (\textbf{a}) Stabilizers of the $[[4,2,2]]$ code and its logical operators. Red (blue) indicates Pauli $\mathrm{X}$ ($\mathrm{Z}$) operators.  (\textbf{b}) Stabilizers of the $[[6,4,2]]$ code and its logical operators. (\textbf{c}-\textbf{d}) A distance 7 $(6,6,6)$-color code (in (\textbf{c})) can be constructed from a plane-shaped distance 4 surface code (in (\textbf{d})). (\textbf{e}-\textbf{f}) A distance 7 $(4,8,8)$-color code (in (\textbf{e})) can be constructed from a frog-shaped distance 4 surface code (in (\textbf{f})). Faces of color codes are colored by lavender, magenta and green. The X and Z stabilizer on each green face correspond to the stabilizers of either a $[[4,2,2]]$ code in (\textbf{a}) or a $[[6,4,2]]$ code in (\textbf{b}). Gray dashed lines in (\textbf{c-d}) (or in (\textbf{e-f})) depict the \textit{fold line} of an \textit{associated folding} of the plane-shaped (or frog-shaped) surface code. }
    \label{fig: small even distance self-dual codes and unfolding the color codes}
\end{figure}

We will use the following property of even distance self-dual CSS codes. 
\begin{lemma}
    [Logical effect of transversal H gates on an even distance self-dual CSS code] \label{lemma: properties of even distance self-dual CSS codes}
    Consider an even distance self-dual CSS code $\mathcal{C}$ with code parameters $[[n:=|\mathcal{C}|,k,d]]$. Then $n$, $k$ and $d$ must be even numbers. Suppose $\mathcal{C}$ have $2k$ logical qubits, then they can be chosen and arranged such that transversally applying H gates on all data qubits of $\mathcal{C}$  implements the following logical unitary:
    \begin{equation}\label{eq: logical unitary by transversal H}
         \prod_{i=1}^{k}\left( \overline{\mathrm{H}}_{2i-1} \otimes  \overline{\mathrm{H}}_{2i} \circ \overline{\mathrm{SWAP}}_{2i-1,2i} \right)
    \end{equation}
\end{lemma}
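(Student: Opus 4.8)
The plan is to analyze the logical action of the transversal gate $\mathrm{H}_{[n]} := \prod_{i=1}^{n}\mathrm{H}_i$ at the level of its symplectic (Pauli-conjugation) action over $\mathbb{F}_2$, and then read off the normal form from the classification of bilinear forms over $\mathbb{F}_2$. First I would record that for any $\mathrm{X}$-type or $\mathrm{Z}$-type Pauli string $\mathrm{P}$ one has $\mathrm{H}_{[n]}\,\mathrm{P}\,\mathrm{H}_{[n]}^{\dagger}=\mathsf{h}(\mathrm{P})$, since the Hadamards acting outside $\mathrm{supp}(\mathrm{P})$ act trivially. Self-duality (Def.~\ref{def: self dual css codes}) then says $\mathrm{H}_{[n]}$ maps the $\mathrm{X}$-stabilizer group onto the $\mathrm{Z}$-stabilizer group and vice versa, so it preserves the stabilizer group and hence the code space, implementing a logical Clifford $U$; moreover $U$ carries $\mathrm{X}$-type logicals to $\mathrm{Z}$-type logicals and back. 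A preliminary observation is that every $\mathrm{X}$-stabilizer $\mathrm{S}$ has even weight: $\mathrm{S}$ and $\mathsf{h}(\mathrm{S})$ are an $\mathrm{X}$- and a $\mathrm{Z}$-string on the same support, and they must commute (both are stabilizers), which happens exactly when $|\mathrm{supp}(\mathrm{S})|$ is even. This makes the weight parity of a logical well-defined modulo stabilizers (the parenthetical in Def.~\ref{def: self dual css codes}) and, together with the even-distance hypothesis, already shows $d$ is even.

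Next I would fix logical representatives, $\mathrm{X}$-type $\overline{\mathrm{X}}_1,\dots,\overline{\mathrm{X}}_K$ and $\mathrm{Z}$-type $\overline{\mathrm{Z}}_1,\dots,\overline{\mathrm{Z}}_K$, where $K$ denotes the number of logical qubits of $\mathcal{C}$, and write the induced $\mathbb{F}_2$-symplectic map $M$ on the logical space in block form with respect to the splitting into the $\mathrm{X}$- and $\mathrm{Z}$-logical subspaces. Because $U$ exchanges the two subspaces, $M=\left(\begin{smallmatrix}0&B\\A&0\end{smallmatrix}\right)$ for some $K\times K$ matrices $A,B$ over $\mathbb{F}_2$. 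The identity $\mathrm{H}_{[n]}^{2}=I$ forces $M^{2}=I$, hence $B=A^{-1}$; the symplectic condition $M^{\top}\Omega M=\Omega$ with $\Omega=\left(\begin{smallmatrix}0&I\\I&0\end{smallmatrix}\right)$ then forces $A^{\top}=A$, so $A$ is symmetric and invertible.

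The crux is to show that $A$ has vanishing diagonal. The entry $A_{ii}$ equals the symplectic pairing $\omega(\overline{\mathrm{X}}_i, M\,\overline{\mathrm{X}}_i)$, i.e.\ it records whether $\overline{\mathrm{X}}_i$ commutes with its image $\mathsf{h}(\overline{\mathrm{X}}_i)$. But $\overline{\mathrm{X}}_i$ and $\mathsf{h}(\overline{\mathrm{X}}_i)$ are an $\mathrm{X}$- and a $\mathrm{Z}$-string on the common support $\mathrm{supp}(\overline{\mathrm{X}}_i)$, so they commute precisely when $\mathrm{wt}(\overline{\mathrm{X}}_i)$ is even, which holds by the even-distance hypothesis. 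Hence $A_{ii}=0$ for all $i$, so the bilinear form defined by $A$ is alternating and non-degenerate. By the standard classification of non-degenerate alternating forms over $\mathbb{F}_2$, such an $A$ has even size and is congruent to a direct sum $\bigoplus_{i=1}^{k}\left(\begin{smallmatrix}0&1\\1&0\end{smallmatrix}\right)$; this forces $K=2k$ to be even, and since a self-dual CSS code has $\mathrm{X}$- and $\mathrm{Z}$-check matrices of equal rank $r$ (so $n=K+2r$), it forces $n$ even as well, completing the parity claims.

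Finally I would translate the congruence back into a choice of logical operators. A congruence $A\mapsto P^{\top}A P$ is exactly the effect of the block-preserving symplectic change of basis $\left(\begin{smallmatrix}P&0\\0&P^{-\top}\end{smallmatrix}\right)$, i.e.\ of re-choosing the logical operators. After this change $M$ becomes block-anti-diagonal with $2\times2$ hyperbolic blocks, and a direct check shows this is precisely the symplectic action of $\prod_{i=1}^{k}\big(\overline{\mathrm{H}}_{2i-1}\otimes\overline{\mathrm{H}}_{2i}\big)\circ\overline{\mathrm{SWAP}}_{2i-1,2i}$ (each block sends $\overline{\mathrm{X}}_{2i-1}\!\to\!\overline{\mathrm{Z}}_{2i}$, $\overline{\mathrm{X}}_{2i}\!\to\!\overline{\mathrm{Z}}_{2i-1}$ and the transposes). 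Since a logical Clifford is determined by its symplectic action up to a logical Pauli and a global phase, the residual freedom is absorbed by fixing the signs of the chosen representatives, yielding the stated unitary. The main obstacle I anticipate is the diagonal-vanishing step: it is the single place the even-distance hypothesis enters, and it is exactly what excludes the competing non-alternating normal form $A\cong I$ (whose transversal-$\mathrm{H}$ action would be a product of bare logical Hadamards rather than the Hadamard-plus-SWAP pairs asserted here).
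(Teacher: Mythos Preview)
Your proof is correct and takes a genuinely different route from the paper's. The paper proceeds by a direct recursive construction: it picks an arbitrary first logical qubit with operators $\overline{\mathrm{X}}_1,\overline{\mathrm{Z}}_1$, observes (via the even-weight hypothesis) that $\mathsf{h}(\overline{\mathrm{X}}_1)$ commutes with $\overline{\mathrm{X}}_1$ and hence furnishes a new $\overline{\mathrm{Z}}_2$ (with $\overline{\mathrm{X}}_2:=\mathsf{h}(\overline{\mathrm{Z}}_1)$), and then iterates, peeling off one Hadamard--SWAP pair at a time until the logical space is exhausted. Your argument instead packages the logical action of $\mathrm{H}_{[n]}$ as a symplectic matrix, reduces to the symmetric invertible block $A$, uses the even-distance hypothesis exactly once to kill its diagonal, and then invokes the classification of nondegenerate alternating forms over $\mathbb{F}_2$ to obtain the hyperbolic block decomposition in one stroke. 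The paper's approach is more elementary and explicitly constructive (one literally sees the logical-qubit pairs being built), and in effect re-proves the alternating-form normal-form theorem inline for this special case; your approach is cleaner, isolates the role of the even-distance hypothesis very sharply (it is precisely what excludes the competing normal form $A\cong I$, which would yield bare logical Hadamards), and generalizes more readily. Both arguments operate at the level of the conjugation action on logical Paulis and treat the residual global-phase ambiguity with the same informality.
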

\begin{proof}
    The idea of this proof is to recursively construct pairs of logical qubits such that the logical effect of the transversal H gate $\mathrm{H}_{[|\mathcal{C}|]}$ on the $i$-th pair (containing qubit $2i-1$ and qubit $2i$) is
    \begin{equation}\label{eq: logical effect on qubit pairs}
         \overline{\mathrm{H}}_{2i-1}\otimes \overline{\mathrm{H}}_{2i}\circ \overline{\mathrm{SWAP}}_{2i-1,2i}. 
    \end{equation}
   In this way, we can prove $k$ is even and also the logical effect of $\mathrm{H}_{[|\mathcal{C}|]}$ is given by Eq.~\ref{eq: logical unitary by transversal H}. Moreover, as the number of independent X stabilizers is the same as the number of independent Z stabilizers, $n$ has the same parity as $k$ and thus is also even. 

    First, we start with an arbitrarily chosen logical qubit (labeled by $1$) with logical $\mathrm{X}$ ($\mathrm{Z}$) operator $\overline{\mathrm{X}}_1$ ($\overline{\mathrm{Z}}_1$). As $|\overline{\mathrm{X}}_1|$ is even, $\mathsf{h}\left(\overline{\mathrm{X}}_1\right)$ commutes with $\overline{\mathrm{X}}_1$ and thus is a logical $\mathrm{Z}$ operator inequivalent to $\overline{\mathrm{Z}}_1$. We define $\overline{\mathrm{Z}}_2$ as $\mathsf{h}\left(\overline{\mathrm{X}}_1\right)$ and similarly $\overline{\mathrm{X}}_2$ as $\mathsf{h}\left(\overline{\mathrm{Z}}_1\right)$. We see that $\{\overline{\mathrm{X}}_1,\overline{\mathrm{Z}}_1,\overline{\mathrm{X}}_2,\overline{\mathrm{Z}}_2\}$ specifies a pair of logical qubits which satisfies the requirement in Eq.~\ref{eq: logical effect on qubit pairs}. 

    Now, suppose we have constructed $n$ pairs of logical qubits such that within each pair $j$ (containing qubit $2j-1$ and qubit $2j$), 
    \[
    \mathsf{h}(\overline{\mathrm{X}}_{2j-1}) = \overline{\mathrm{Z}}_{2j}, \quad \text{and}\quad \mathsf{h}(\overline{\mathrm{Z}}_{2j-1}) = \overline{\mathrm{X}}_{2j}.
    \]
    If $\mathcal{C}$ encodes only $2n$ logical qubits, then the proof is complete. Otherwise, we can find another logical qubit $2n+1$ such that its logical operators $\overline{\mathrm{X}}_{2n+1}$ and $\overline{\mathrm{Z}}_{2n+1}$ commute with $\overline{\mathrm{Z}}_{l}$ and $\overline{\mathrm{X}}_{l}$ respectively for $\forall\  l\in\{1,2,\cdots,2n\}$. Moreover, $\mathsf{h}\left(\overline{\mathrm{X}}_{2n+1}\right)$ ($\mathsf{h}\left(\overline{\mathrm{Z}}_{2n+1}\right)$) is also a logical operator that commutes with $\overline{\mathrm{X}}_{m}$ ($\overline{\mathrm{Z}}_{m}$) for $\forall m\in\{1,2,\cdots,2n+1\}$. In this way, a new qubit pair specified by $\{\overline{\mathrm{X}}_{2n+1},\overline{\mathrm{Z}}_{2n+1},\overline{\mathrm{X}}_{2n+2}:=\mathsf{h}\left(\overline{\mathrm{Z}}_{2n+1}\right),\overline{\mathrm{Z}}_{2n+2}:=\mathsf{h}\left(\overline{\mathrm{X}}_{2n+1}\right)\}$ is constructed and satisfies the requirement in Eq.~\ref{eq: logical effect on qubit pairs}. 
    
\end{proof}

Based on Lemma.~\ref{lemma: properties of even distance self-dual CSS codes}, we tacitly assume in the rest of the paper that for any even distance self-dual CSS code $\mathcal{C}$ with $2k$ logical qubits, its logical qubits are already arranged such that the transversal $\mathrm{H}$ gate on $\mathcal{C}$, $\mathrm{H}_{[|\mathcal{C}|]}$, implements the logical unitary in Eq.~\ref{eq: logical unitary by transversal H}. Also, for each $i\in\{1,2,\cdots,k\}$, the logical qubits $2i-1$ and $2i$ are grouped into a logical qubit pair. 

\begin{proposition}
    [Transforming a fold-dual CSS code into a self-dual CSS code via concatenation] 
    \label{prop: construct self-dual css from fold-dual css}
    Given a fold-dual CSS $[[n,k,d]]$ code $\mathcal{C}$ with associated \textit{folding} $\sigma$, and a series of even weight self-dual CSS codes $\mathcal{C}_a$ for  $a=1,2,\cdots,l$  with code parameters $[[n_a,k_a,d_a]]$ ($n_a$, $k_a$ and $d_a$ are all positive even numbers) for each $a$, then as long as 
    \[n-|inv(\sigma)| = \sum_{a}k_a,\]
    we can construct a self-dual CSS code $\tilde{\mathcal{C}}$ by encoding each data qubit pair $\{i,\sigma(i)>i\}$ in $\mathcal{C}$ as a logical qubit pair in $\mathcal{C}_{a}$ for some $a\in\{1,2,\cdots,l\}$. 
\end{proposition}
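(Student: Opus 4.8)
The plan is to realize $\tilde{\mathcal{C}}$ as the concatenated code in which each $2$-cycle $\{i,\sigma(i)\}$ of the associated \textit{folding} is expanded into a block of inner data qubits, with the ordered pair $(i,\sigma(i))$ playing the role of a logical qubit pair of one of the small codes $\mathcal{C}_a$. Concretely, the data qubits of $\tilde{\mathcal{C}}$ are the fold-line qubits $inv(\sigma)$ together with all data qubits of the blocks $\mathcal{C}_1,\dots,\mathcal{C}_l$; the counting hypothesis $n-|inv(\sigma)|=\sum_a k_a$ guarantees that the number of $2$-cycles of $\sigma$ (namely $(n-|inv(\sigma)|)/2$) equals the total number $\sum_a k_a/2$ of logical qubit pairs supplied by the inner codes (recall each $k_a$ is even by Lemma~\ref{lemma: properties of even distance self-dual CSS codes}), so the assignment of $2$-cycles to inner logical qubit pairs can be made bijectively. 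For an outer Pauli $\mathrm{P}$ on $\mathcal{C}$ I define its \emph{encoded} version $\widetilde{\mathrm{P}}$ blockwise: on a fold-line qubit $\widetilde{\mathrm{P}}$ acts as $\mathrm{P}$ does, and on a $2$-cycle assigned to $\mathcal{C}_a$ the restriction $\mathrm{P}_i\otimes\mathrm{P}_{\sigma(i)}$ is replaced by the matching logical operator of the corresponding logical qubit pair of $\mathcal{C}_a$. The stabilizer group of $\tilde{\mathcal{C}}$ is then generated by the encoded outer stabilizers $\widetilde{\mathrm{S}}$ together with all inner stabilizers of every block.

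First I would check that $\tilde{\mathcal{C}}$ is a legitimate CSS code. Fixing, for each inner code, logical operators that are $\mathrm{X}$-type for every logical $\overline{\mathrm{X}}$ and $\mathrm{Z}$-type for every logical $\overline{\mathrm{Z}}$ (always possible for a CSS code, and compatible with the $\mathsf{h}$-pairing of Lemma~\ref{lemma: properties of even distance self-dual CSS codes} because $\mathsf{h}$ exchanges the two types), the encoded image of an outer $\mathrm{X}$ ($\mathrm{Z}$) stabilizer is again $\mathrm{X}$-type ($\mathrm{Z}$-type), and the inner stabilizers are $\mathrm{X}$- or $\mathrm{Z}$-type by definition. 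Commutativity is inherited: encoded outer operators reproduce, block by block, the commutation relations of the underlying outer Paulis (inner logical operators commute exactly as the pair $(i,\sigma(i))$ Paulis do), inner stabilizers commute with encoded operators because logical operators commute with stabilizers, and distinct blocks have disjoint support. A dimension count using $n-|inv(\sigma)|=\sum_a k_a$ then yields exactly $k$ logical qubits for $\tilde{\mathcal{C}}$.

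The heart of the argument is an equivariance property of the transversal Hadamard. Applying $\mathrm{H}$ to every data qubit of $\tilde{\mathcal{C}}$ acts as a plain $\mathrm{H}$ on each fold-line qubit, while on each block $\mathcal{C}_a$ it implements, by Lemma~\ref{lemma: properties of even distance self-dual CSS codes}, the logical unitary $\prod(\overline{\mathrm{H}}\otimes\overline{\mathrm{H}}\circ\overline{\mathrm{SWAP}})$ on the matched logical qubit pairs. Translated back to the outer code, this is precisely the operator $\hat{\sigma}\,\mathrm{H}_{[|\mathcal{C}|]}$ acting on the (effective) data qubits of $\mathcal{C}$. Since for any Pauli the map $\mathsf{h}(\cdot)$ coincides with conjugation by $\mathrm{H}$ on \emph{all} qubits (off-support Hadamards fix the identity), this gives the key identity $\mathsf{h}(\widetilde{\mathrm{P}})=\widetilde{\hat{\sigma}(\mathsf{h}(\mathrm{P}))}$ for every outer Pauli $\mathrm{P}$, where I also use that the permutation $\hat{\sigma}$ commutes with the uniform transversal $\mathrm{H}$, hence with $\mathsf{h}$.

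Self-duality then follows at once. For an inner $\mathrm{X}$ stabilizer, $\mathsf{h}$ sends it to a $\mathrm{Z}$ stabilizer of the same block by self-duality of $\mathcal{C}_a$. For an encoded outer $\mathrm{X}$ stabilizer $\widetilde{\mathrm{S}}$, the key identity gives $\mathsf{h}(\widetilde{\mathrm{S}})=\widetilde{\hat{\sigma}(\mathsf{h}(\mathrm{S}))}=\widetilde{\mathsf{h}(\hat{\sigma}(\mathrm{S}))}$, and $\mathsf{h}(\hat{\sigma}(\mathrm{S}))$ is a $\mathrm{Z}$ stabilizer of $\mathcal{C}$ by the fold-duality of $\mathcal{C}$ (Def.~\ref{def: fold duality}); its encoded image is therefore a $\mathrm{Z}$ stabilizer of $\tilde{\mathcal{C}}$. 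The symmetric computation handles $\mathrm{Z}$ stabilizers, so $\tilde{\mathcal{C}}$ is self-dual. I expect the main obstacle to be bookkeeping rather than conceptual: one must verify that the encoded generators remain independent and of definite $\mathrm{X}$/$\mathrm{Z}$ type, and confirm that the blockwise logical action of the inner transversal $\mathrm{H}$ genuinely assembles into the global $\hat{\sigma}\,\mathrm{H}_{[|\mathcal{C}|]}$ — that is, that the $\overline{\mathrm{SWAP}}$ factors compose into exactly the folding permutation $\hat{\sigma}$ without introducing spurious signs.
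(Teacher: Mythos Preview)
Your proposal is correct and follows essentially the same approach as the paper: you define the concatenated code $\tilde{\mathcal{C}}$ via the encoding map (the paper's $\mathsf{Enc}$, your $\widetilde{\cdot}$), split its stabilizer generators into inner stabilizers and encoded outer stabilizers, and verify self-duality case by case, with the encoded case reducing to the identity $\mathsf{h}(\widetilde{\mathrm{S}})=\widetilde{\mathsf{h}(\hat{\sigma}(\mathrm{S}))}$ and then fold-duality of $\mathcal{C}$. The only cosmetic difference is that the paper establishes this identity by a direct qubit-by-qubit computation (fold-line qubits versus $2$-cycle qubits) rather than by invoking the logical action of transversal $\mathrm{H}$ from Lemma~\ref{lemma: properties of even distance self-dual CSS codes}; your extra checks (CSS property, dimension count, sign bookkeeping) are sound but the paper simply omits them.
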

\begin{proof}
    Define encoding map $\mathsf{Enc}$ that maps Pauli string operators on (data qubits of) $\mathcal{C}$ to Pauli string operators on $\tilde{\mathcal{C}}$ such that for each Pauli string operator $\mathrm{P}$ on $\mathcal{C}$, $\mathsf{Enc}(\mathrm{P})$ is the encoded Pauli string operator on $\tilde{\mathcal{C}}$. Denote the X stabilizer generating set of $\mathcal{C}$ and $\mathcal{C}_a$ ($a\in\{1,2,\cdots,l\}$) as $\mathcal{S}_{\mathrm{X}}$ and $\mathcal{S}_{\mathrm{X},a}$ respectively. The generating set of X stabilizers for $\tilde{\mathcal{C}}$ can be chosen as $\tilde{\mathcal{S}}_{\mathrm{X}}=(\sqcup_{a}\mathcal{S}_{\mathrm{X},a})\sqcup \mathsf{Enc}(\mathcal{S}_{\mathrm{X}})$, with $\mathsf{Enc}(\mathcal{S}_{\mathrm{X}}):=\{\mathsf{Enc}(\mathrm{S})|\mathrm{S}\in\mathcal{S}_{\mathrm{X}}\}$.  

Consider an X stabilizer $\tilde{\mathrm{S}}\in\tilde{\mathcal{S}}_{\mathrm{X}}$. If $\tilde{\mathrm{S}}\in\sqcup_{a}\mathcal{S}_{\mathrm{X},a}$, then $\mathrm{h}(\tilde{\mathrm{S}})$ is a $\mathrm{Z}$ stabilizer of $\tilde{\mathcal{C}}$ due to self-duality of the codes $\mathcal{C}_{a}$. On the other hand, if $\tilde{\mathrm{S}}\in\mathsf{Enc}(\mathcal{S}_{\mathrm{X}})$, there exists $\mathrm{S}\in\mathcal{S}_{\mathrm{X}}$ such that $\tilde{\mathrm{S}}=\mathsf{Enc}(\mathrm{S})$. For a qubit $c\in inv(\sigma)$ on the \textit{fold line} of $\sigma$, $\mathsf{h}(\mathsf{Enc}(\mathrm{X}_{c}))=\mathsf{Enc}(\mathrm{Z}_{c})$. For a qubit $t\in[|\mathcal{C}|]\backslash inv(\sigma)$ not on the \textit{fold line}, $\mathsf{h}(\mathsf{Enc}(\mathrm{X}_{t}))=\mathsf{Enc}(\mathrm{Z}_{\sigma(t)})$. Thus
\begin{align}
\mathsf{h}(\tilde{\mathrm{S}}) &= \prod_{c\in\mathrm{supp}(\mathrm{S})\cap inv(\sigma)} \mathsf{Enc}(\mathrm{Z}_{c}) \cdot \prod_{t\in\mathrm{supp}(\mathrm{S})\backslash inv(\sigma)} \mathsf{Enc}(\mathrm{Z}_{\sigma(t)}) \nonumber \\
&= \mathsf{Enc}(\mathsf{h}(\hat{\sigma}(\mathrm{S}))) 
\end{align}
Due to the fold-duality of $\mathcal{C}$, $\mathsf{h}(\hat{\sigma}(\mathrm{S}))$ is a $\mathrm{Z}$ stabilizer of $\mathcal{C}$. Thus $\mathsf{h}(\tilde{\mathrm{S}})$ is a $\mathrm{Z}$ stabilizer of $\tilde{\mathcal{C}}$. Similarly, we can also prove that for each $\mathrm{Z}$ stabilizer $\tilde{\mathrm{S}}$ of $\tilde{\mathcal{C}}$, $\mathrm{h}(\tilde{\mathrm{S}})$ is an $\mathrm{X}$ stabilizer of $\tilde{\mathcal{C}}$. Hence $\tilde{\mathcal{C}}$ is self-dual.  
\end{proof}

\begin{figure}
    \centering
    \includegraphics[width=\columnwidth]{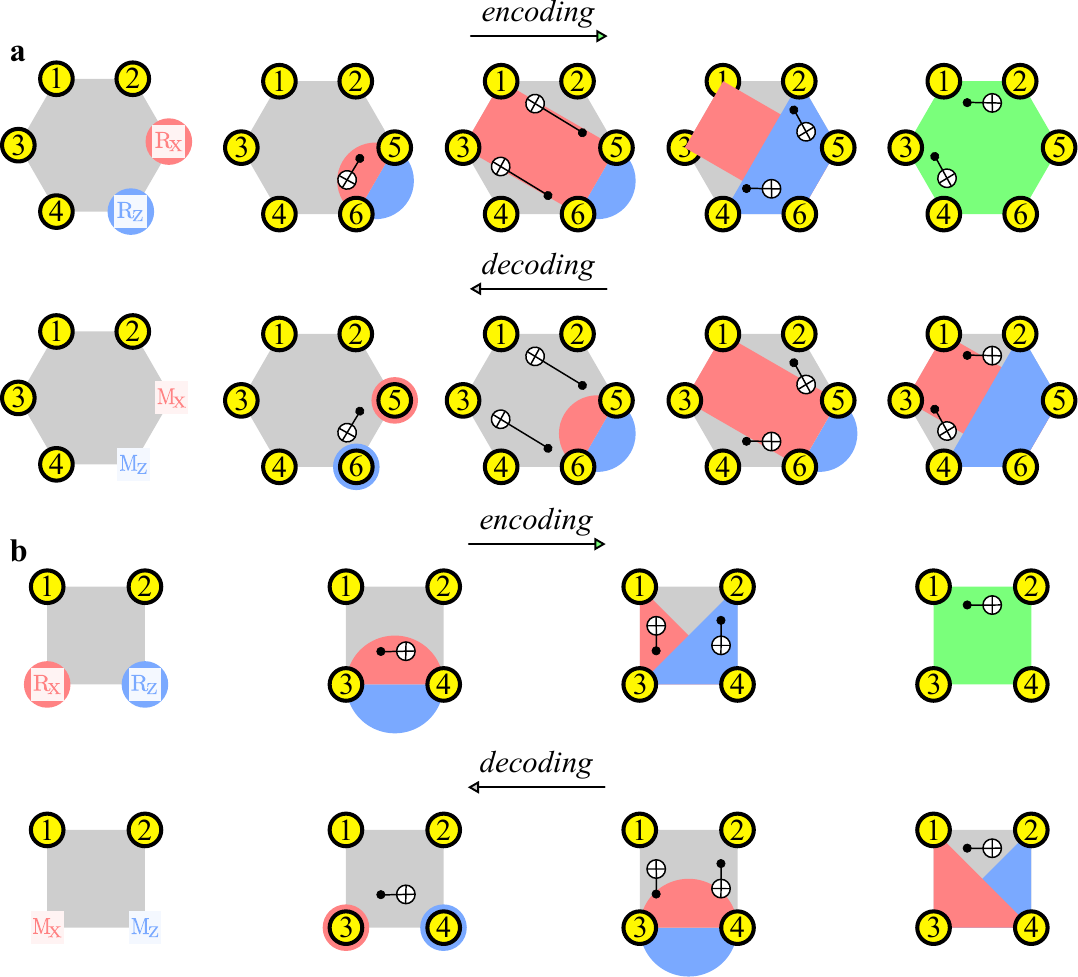}
    \caption{Building blocks that enable fast morphing between an $\mathbb{RP}^2$-3 ($\mathbb{RP}^2$-5) code and an SRP-3 (SRP-5) code. (\textbf{a}) Encoding and decoding circuits for the $[[6,4,2]]$ code. (\textbf{b}) Encoding and decoding circuits for the $[[4,2,2]]$ code. Assuming the decoding circuit is 
 run right after the encoding circuit, we track the detecting regions~\cite{mcewen_relaxing_2023} generated by reset locations throughout the encoding and decoding circuit in (\textbf{a}) and (\textbf{b}). Each green face indicates a Pauli $\mathrm{X}$ string operator \textit{and} a Pauli $\mathrm{Z}$ string operator.}
    \label{fig: encoding and decoding circuits of small self dual codes}
\end{figure}

\begin{figure}[ht]
    \centering
    \includegraphics[width=\columnwidth]{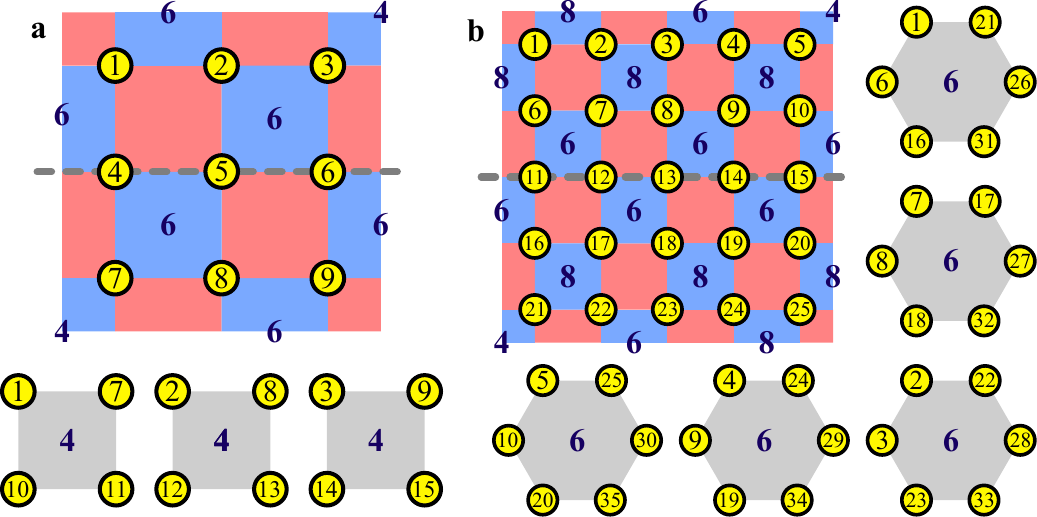}
    \caption{Operational definition of the SRP-3 and SRP-5 code. Each $\mathbb{RP}^2$ code has an \textit{associated folding} with its \textit{fold line} illustrated by a grey dashed line here. (\textbf{a}) The $[[15,1,3]]$ SRP-3 code is obtained by running the encoding circuit in Fig.~\ref{fig: encoding and decoding circuits of small self dual codes}(\textbf{b}) on each gray square. Here qubits 1-9 are the data qubits of an $\mathbb{RP}^2$-3 code; qubits 10-15 are ancilla qubits introduced at the beginning of the encoding circuits. (\textbf{b}) The $[[35,1,5]]$ SRP-5 code is obtained by running the encoding circuit in Fig.~\ref{fig: encoding and decoding circuits of small self dual codes}(\textbf{a}) on each gray hexagon. Here qubits 1-25 are the data qubits of an $\mathbb{RP}^2$-5 code; qubits 26-35 are ancilla qubits introduced at the beginning of the encoding circuits. The weight of each $\mathrm{Z}$ stabilizer of the SRP-3 (SRP-5) code is presented in the following way. For each $\mathrm{Z}$ stabilizer of the $\mathbb{RP}^2$-3 ($\mathbb{RP}^2$-5) code, the weight of the corresponding encoded $\mathrm{Z}$ stabilizer of the SRP-3 (SRP-5) code is illustrated at the center of the original (un-encoded) $\mathrm{Z}$ stabilizer. The weight of the $\mathrm{Z}$ stabilizer of each $[[4,2,2]]$ and $[[6,4,2]]$ code is illustrated at the center of each gray square and hexagon respectively.}
    \label{fig: srp3 and srp5 codes}
\end{figure}

From Prop.~\ref{prop: construct self-dual css from fold-dual css}, we see that the constructed self-dual code $\tilde{\mathcal{C}}$ is composed of data qubits from $\mathcal{C}$ that remain invariant under \textit{folding} $\sigma$ and data qubits from the small self-dual codes. The (small) even distance self-dual CSS codes used for concatenation in~\cite{criger_concatenated_toric_code_2016} are the $[[4,2,2]]$ code and the $[[6,4,2]]$ code (Fig.~\ref{fig: small even distance self-dual codes and unfolding the color codes}(\textbf{a}-\textbf{b})). A $(6,6,6)$-color code (Fig.~\ref{fig: small even distance self-dual codes and unfolding the color codes}(\textbf{c})) with distance $7$ can be constructed from a plane-shaped fold-dual surface code (Fig.~\ref{fig: small even distance self-dual codes and unfolding the color codes}(\textbf{d})) with distance 4 via concatenating the surface code with three copies of the $[[6,4,2]]$ codes and three copies of the $[[4,2,2]]$ codes. Each green face of the color code in Fig.~\ref{fig: small even distance self-dual codes and unfolding the color codes}(\textbf{c}) represents either a $[[6,4,2]]$ code or a $[[4,2,2]]$ code. Similarly, as shown in Fig.~\ref{fig: small even distance self-dual codes and unfolding the color codes}(\textbf{e}-\textbf{f}), a $(4,8,8)$-color code with distance $7$ can be constructed from a frog-shaped fold-dual surface code with distance $4$ by concatenating it with six copies of the $[[4,2,2]]$ codes, also marked by green faces. More generally, both $(6,6,6)$- and $(4,8,8)$-color codes with distance $2t-1$ can be constructed similarly using surface codes with distance $t$. (Notice that any color code that encode only one logical qubit must have an odd distance due to Lemma~\ref{lemma: properties of even distance self-dual CSS codes}.) We note that there are other proposals that also use the technique of concatenation on top of $[[4,2,2]]$ or $[[6,4,2]]$ codes~\cite{knill_2005,yoshida2024concatenatecodessavequbits,goto_2024,berthusen2025adaptivesyndromeextraction}. Our work add to these results by applying this technique in growing magic states.


\subsection{Construction of SRP codes and morphing between them and \texorpdfstring{$\mathbb{RP}^2$}{RP} codes}\label{subsec: SRP code construction}

In this subsection, we will operationally define the SRP-3 (SRP-5) code by specifying how data qubits from an $\mathbb{RP}^2$-3 ($\mathbb{RP}^2$-5) code are encoded into $[[4,2,2]]$ ($[[6,4,2]]$) codes. We start by describing the encoding and decoding circuits for $[[4,2,2]]$ and $[[6,4,2]]$ codes. Morphing between $\mathbb{RP}^2$ codes and SRP codes can be implemented by running these encoding or decoding circuits according to our operational definition of the SRP codes.

As shown in Fig.~\ref{fig: encoding and decoding circuits of small self dual codes}(\textbf{a}-\textbf{b}), we designed encoding circuits for the $[[6,4,2]]$ and the $[[4,2,2]]$ code respectively so that a physical qubit labeled by $i$ ($i\in\{1,2,3,4\}$ for the $[[6,4,2]]$ code and $i\in\{1,2\}$ for the $[[4,2,2]]$ code) is morphed to logical qubit $i$ in Fig.~\ref{fig: small even distance self-dual codes and unfolding the color codes}.  The encoding circuits work by initializing a Bell pair on two ancilla qubits and growing its stabilizers to the stabilizers of either the $[[6,4,2]]$ code or the $[[4,2,2]]$ code. In particular, the structure of the encoding circuit for the $[[6,4,2]]$ code ensures errors within the qubit block $\{1,2,5\}$ or $\{3,4,6\}$ will only propagate within the block. (We note that the encoding circuit for the $[[6,4,2]]$ code used here is very similar to the one in~\cite{goto_2024}.) In our work, we will morph an $\mathbb{RP}^2$ code to an SRP code and then back after a logical double-check on the SRP code. The backwards morphing is implemented by the decoding circuits, which are encoding circuits run in reverse with resets replaced by measurements (Fig.~\ref{fig: encoding and decoding circuits of small self dual codes}). Since the logical double-checking is supposed to leave the stabilizers of the SRP code invariant,  the ancilla qubits measured at the end of the decoding circuits also serve as flag qubits to detect errors during the encoding, logical double-checking and decoding stage.

Notice that each $\mathbb{RP}^2$ code is fold-dual and has an \textit{associated folding} whose \textit{fold line} is the set of qubits on the horizontal line through the center of the $\mathbb{RP}^2$ code patch (Fig~\ref{fig: srp3 and srp5 codes}). Based on this observation, we define the SRP-3 and SRP-5 code operationally in Fig.~\ref{fig: srp3 and srp5 codes}.  The to-be-encoded data qubits of the $\mathbb{RP}^2$-3 code and extra ancilla qubits are arranged on gray faces at the moment immediately  before the encoding step (Fig.~\ref{fig: srp3 and srp5 codes}(\textbf{a})).  The SRP-3 code is defined as the resulting code after applying the encoding circuit on each gray face according to Fig.~\ref{fig: small even distance self-dual codes and unfolding the color codes}(\textbf{b}). The SRP-5 code is defined similarly in Fig.~\ref{fig: srp3 and srp5 codes}(\textbf{b}).

\begin{figure*}
    \centering
    \includegraphics[width=\textwidth]{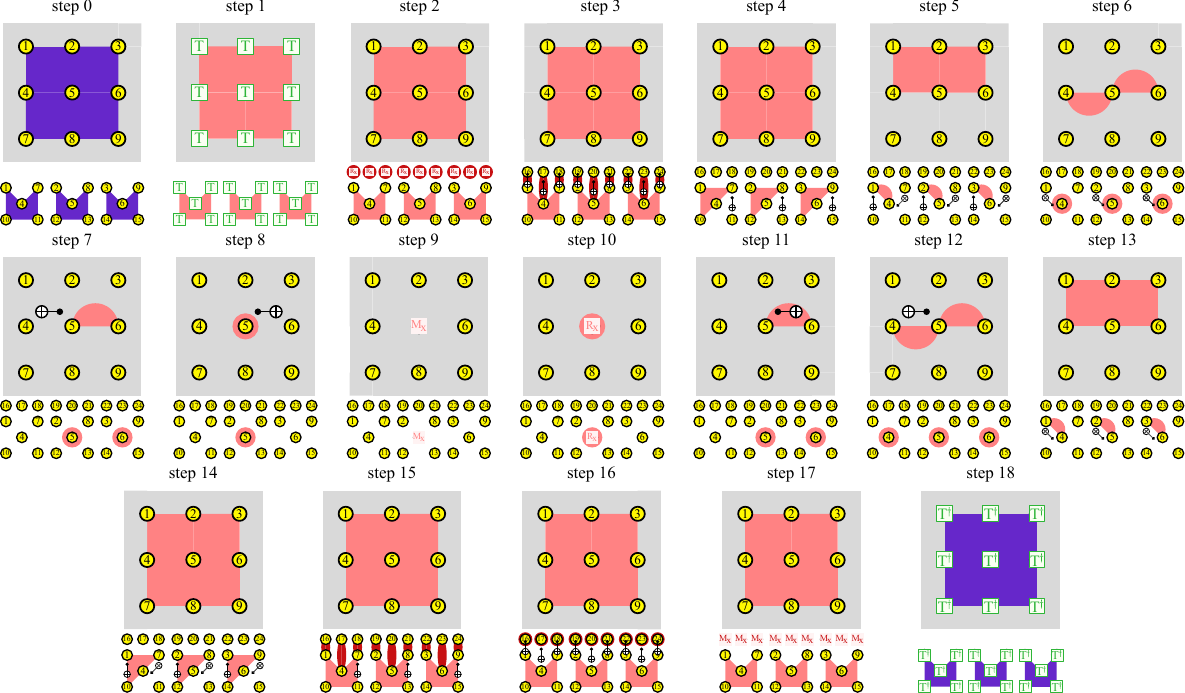}
    \caption{Double-checking circuit for the SRP-3 code. At step 0, purple regions illustrate the operator $\tilde{\mathrm{H}}^{\otimes 15}$ that we need to measure. At step 1, $\tilde{\mathrm{H}}^{\otimes 15}$ is transformed to $\mathrm{X}^{\otimes 15}$ whose evolution is illustrated by (light) red regions and tracked until step 17. Detector regions generated by the flag qubits (initialized at step 2) are colored by dark red and are only shown in step 2, 3, 15 and 16. The purple regions at step 18 indicate the state here should be a $+1$ eigenstate of $\tilde{\mathrm{H}}^{\otimes 15}$ for an ideal double-checking circuit.}
    \label{fig: double checking for SRP-3}
\end{figure*}

\begin{figure*}
    \centering
    \includegraphics[width=\textwidth]{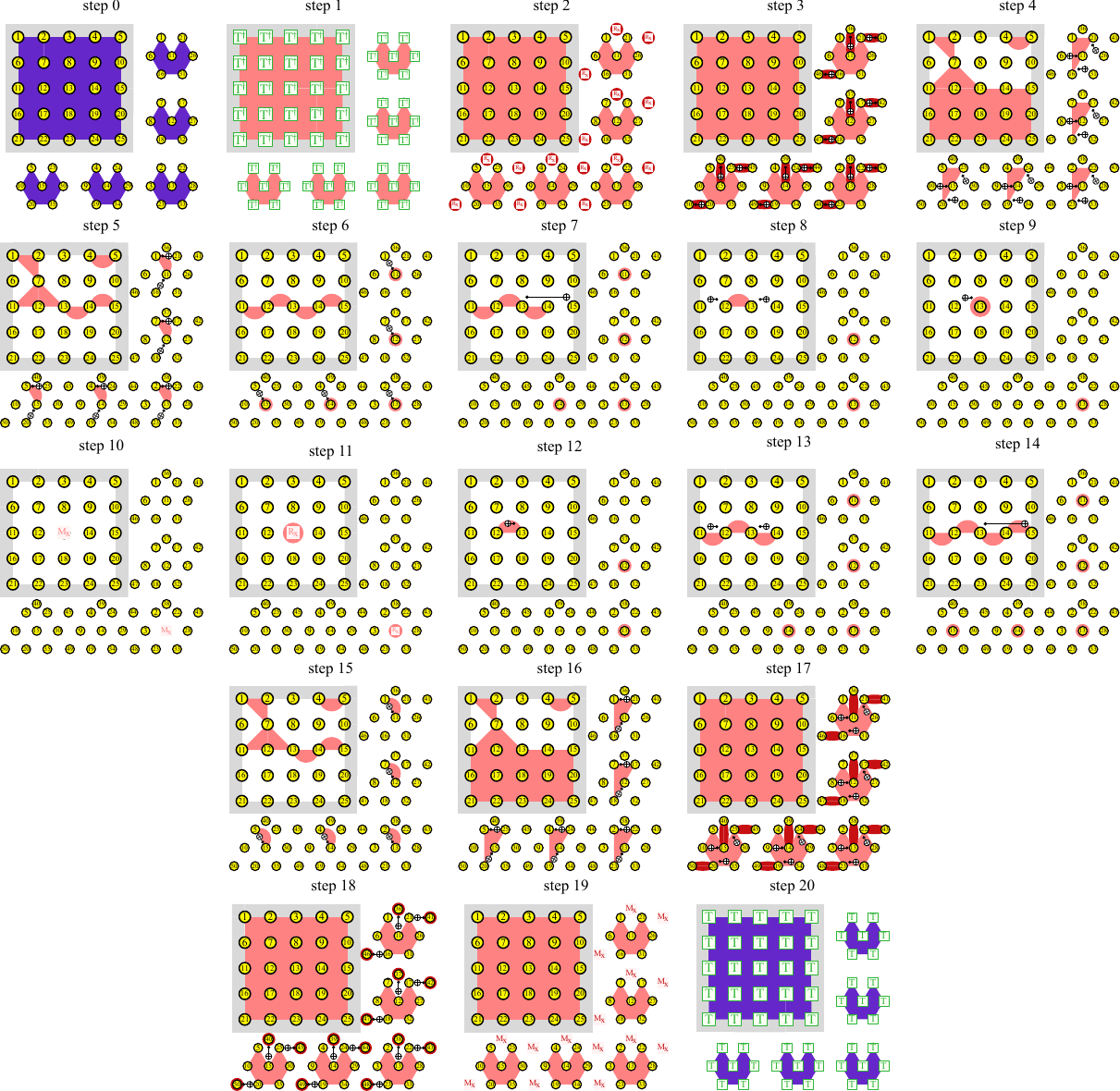}
    \caption{Double-checking circuit for the SRP-5 code. At step 0, purple regions represent the operator $\mathrm{H}_{\mathrm{XY}}^{\otimes 35}$ that we need to measure. At step 1, $\mathrm{H}_{\mathrm{XY}}^{\otimes 35}$ is transformed to $\mathrm{X}^{\otimes 35}$ whose evolution is illustrated by (light) red regions and tracked until step 19. Detector regions generated by the flag qubits (initialized at step 2) are colored by dark red and are only shown in step 2, 3, 17 and 18. The purple regions at step 20 indicate the state here should be a $+1$ eigenstate of $\mathrm{H}^{\otimes 35}$ for an ideal double-checking circuit. One key difference of the circuit here form the double-checking circuit for the SRP-3 code is that a layer of $\mathrm{T}^{\dagger}$ gates (instead of a layler of $\mathrm{T}$ gates for the SRP-3 code (Fig.~\ref{fig: double checking for SRP-3})) is applied at step 1. The reason for such difference is that the operator to be measured is $\mathrm{H}_{\mathrm{XY}}^{\otimes 35}$ (instead of $\tilde{\mathrm{H}}_{\mathrm{XY}}^{\otimes 15}$ for the SRP-3 code). }
    \label{fig: double checking for SRP-5}
\end{figure*}

\section{Logical Double-Checking of SRP-3 and SRP-5 codes}\label{sec: logical double checking}

To detect logical errors, we design the circuits for double-checking the logical $\mathrm{H}_{\mathrm{XY}}$ operator of the SRP-3 and SRP-5 code respectively,  
following~\cite{gidney_magic_2024}. For the SRP-3 code, transversal $\tilde{\mathrm{H}}_{\mathrm{XY}}$ ($\tilde{\mathrm{H}}_{\mathrm{XY}}:=\frac{1}{\sqrt{2}}(\mathrm{X}-\mathrm{Y})$) on all data qubits implements the logical $\mathrm{H}_{\mathrm{XY}}$. On the other hand, for the SRP-5 code, the logical $\mathrm{H}_{\mathrm{XY}}$ is implemented by transversal $\mathrm{H}_{\mathrm{XY}}$ on all data qubits. Also, to make sure such transversal implementation only affects the logical qubit and does not flip the sign of any stabilizer, we require the SRP-3 (SRP-5) code prepared in the $+1$ eigenspace of Z stabilizers with doubly-even weight and in the $-1$ eigenspace of Z stabilizers with singly-even weight~\cite{gidney_magic_2024}. We specify how to achieve this with feed-forward Pauli X correction in Sec.~\ref{sec: injection and growth}. In the following, we describe our double-checking circuits in detail. 

The double-checking circuit for the SRP-3 (SRP-5) code is composed of two components: the logical processing component for double-checking the logical operator and the flag component for detecting errors during the former component. For the SRP-3 code (Fig.~\ref{fig: double checking for SRP-3}), the logical processing component is composed of step 0-1, 4-15 and 18. To measure the $\tilde{\mathrm{H}}_{\mathrm{XY}}^{\otimes 15}$ operator, we first transform it to $\mathrm{X}^{\otimes 15}$ by applying a layer of $\mathrm{T}$ gates on all qubits. A series of CNOT gates are applied to contract $\mathrm{X}^{\otimes 15}$ to a single Pauli $\mathrm{X}$ on qubit 5, which is then measured in the $\mathrm{X}$ basis to read out the logical measurement result. The measurement circuit is then run in reverse to restore the code as well as the logical qubit. As for the flag component, at step 2-3, we insert flag qubits, which are then decoupled from the circuits at step 16 and measured at step 17. Note that the flag component only checks for Z errors and leave X errors unchecked. The latter may trigger detectors in the following SE rounds but do not induce logical error by themselves. This can be readily checked by running an ideal double-checking circuit and then noiselessly measure all stabilizers. We note that there is a lot of freedom in designing the double-checking circuits. For our design, we prioritize conceptual simplicity and explicit parallelizability (at each step shown in Fig.~\ref{fig: double checking for SRP-3}, we have the same operation on each bottom square). We leave a more compact design for future work. The double checking circuit for the SRP-5 code is designed based on the same principle and is shown in Fig.~\ref{fig: double checking for SRP-5}.

\section{Injection and growth}\label{sec: injection and growth}
\begin{figure*}
    \centering
    \includegraphics[width=\textwidth]{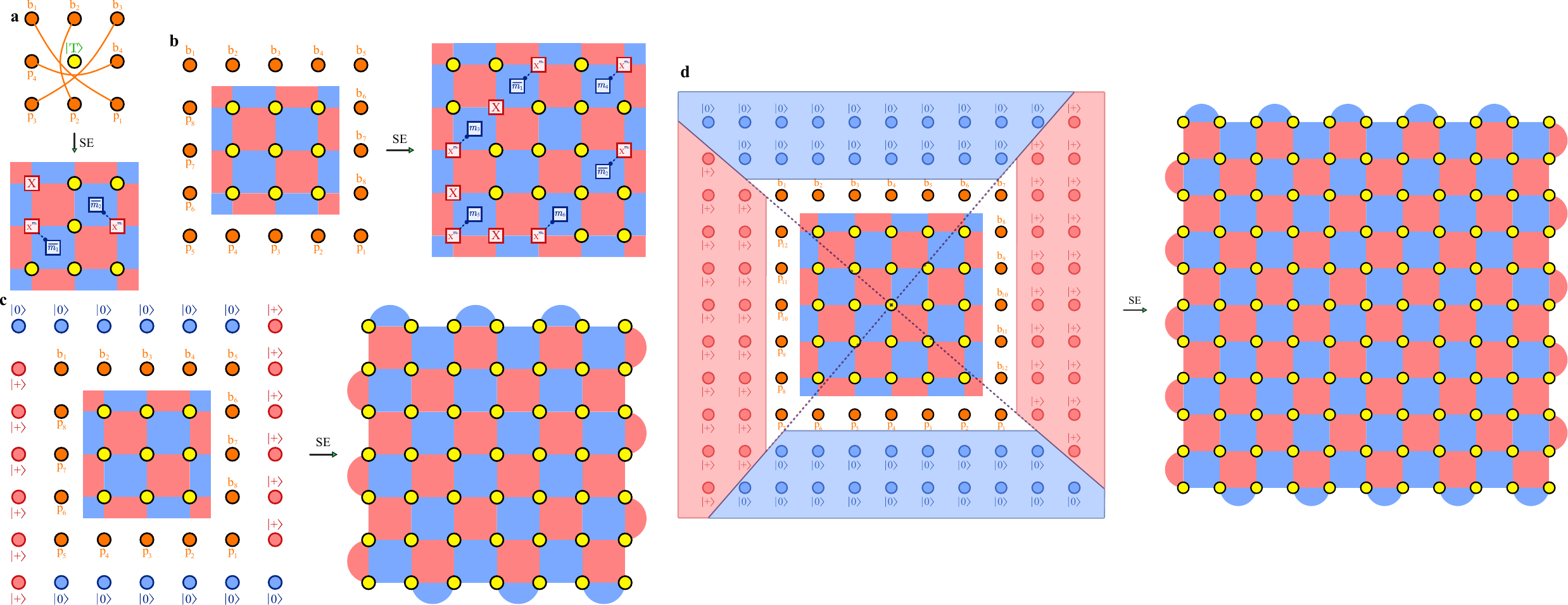}
    \caption{Circuits for code growth. Each qubit pair $\{b_i,p_i\}$ ($i\in\mathbb{N}$) is a Bell pair. Stabilizer measurement result is denoted by $m_a$ or $\overline{m_a}$ (here $m_a$ records the inverse of the actual measurement result), for some $a\in\mathbb{N}$, at the center of the measured stabilizer.   (\textbf{a}) Injection circuit. Qubits from the same Bell pair are connected by an orange curve. After an SE round, a single qubit $\mathrm{T}$ state is injected into an $\mathbb{RP}^2$-3 code. (\textbf{b}) Growing an $\mathbb{RP}^2$-3 code to an $\mathbb{RP}^2$-5 code. In both (\textbf{a}) and (\textbf{b}), Pauli $\mathrm{X}$ corrections are applied at the end of the SE round to fix the signs of the Z stabilizers. A Pauli $\mathrm{X}$ operation is linked by a dashed blue line to a stabilizer measurement result iff it controls this Pauli $\mathrm{X}$ operation. (\textbf{c}) Expanding an $\mathbb{RP}^2$-3 code to a distance 7 rotated surface code. (\textbf{d}) Expanding an $\mathbb{RP}^2$-5 code to a distance 11 rotated surface code. Four blocks of initialized $|0\rangle$ states and $|+\rangle$ states are illustrated by light blue and light red regions respectively.} 
    \label{fig: circuits for patch growth}
\end{figure*}
In this section, we describe three types of circuits for patch growth: an injection circuit (for injecting a single-qubit state into a $\mathbb{RP}^2$-3 code), an $\mathbb{RP}^2$-growth circuit (for growing an $\mathbb{RP}^2$-3 code to an $\mathbb{RP}^2$-5 code) and expansion circuits (for expanding an $\mathbb{RP}^2$ code to a larger rotated surface code). All our growth circuits grow a smaller code at center outwards and make use of Bell pairs.  The injection circuit is implemented by a single round of SE on top of a single qubit $|\mathrm{T}\rangle$ at the center and four surrounding Bell pairs (Fig.~\ref{fig: circuits for patch growth}(\textbf{a})). We fix the signs of the Z stabilizers of the $\mathbb{RP}^2$-3 code by applying Pauli $\mathrm{X}$ corrections based on the Z stabilizer measurement results (Fig.~\ref{fig: circuits for patch growth}(\textbf{a})). Similar to $\mathrm{T}$ state injection, growing the $\mathbb{RP}^2$-3 code to a $\mathbb{RP}^2$-5 code and fixing its Z stabilizer signs can be implemented according to Fig.~\ref{fig: circuits for patch growth}(\textbf{b}). Expanding an $\mathbb{RP}^2$-3 code to a distance 7 rotated surface code requires preparing a layer of Bell pairs surrounding the $\mathbb{RP}^2$ code and also initializing four blocks of $|0\rangle$ or $|+\rangle$ states (Fig.~\ref{fig: circuits for patch growth}(\textbf{c})). Expanding an $\mathbb{RP}^2$-5 code to a distance 11 rotated surface code is also done in a similar fashion (Fig.~\ref{fig: circuits for patch growth}(\textbf{d}). In our MSC-3 protocol, the expansion circuit is run concurrently with the  SRP-3-to-$\mathbb{RP}^2$-3 morphing circuit (Appendix~\ref{appsec: extra details}). Similarly, in our MSC-5 protocol, the $\mathbb{RP}^2$-growth circuit is run concurrently with the SRP-3-to-$\mathbb{RP}^2$-3 morphing circuit and the expansion circuit for the $\mathbb{RP}^2$-5 code is run concurrently with the SRP-5-to-$\mathbb{RP}^2$-5 morphing circuit.   

\section{Final-stage post-selection based on soft outputs}\label{sec: final stage post selection} 
During the expansion stage as well as its first few following SE rounds, the logical qubit is not yet fully protected at the larger surface code distance and is still more susceptible to errors. This necessitates another round of  post-selection at the end of the MSC protocol. Prior works~\cite{gidney_magic_2024,vaknin_magic_2025} on magic state cultivation uses the complementary gap method~\cite{gidney_yoked_2023} to estimate, for each shot, the decoding confidence which is used to determine whether this shot should be discarded. In this work, an alternative method named the soft-output method developed in~\cite{meister_efficient_2024} is used to obtain a measure of decoding confidence. In particular, following~\cite{meister_efficient_2024,gidney_how_to_2024}, we customized the PyMatching package~\cite{higgott_sparse_2025} to compute the soft outputs from internal variables created during the decoding process for each shot.  The soft-output method is more versatile and suitable for our case, which involves  boundary-less $\mathbb{RP}^2$ codes, as this method does not require any specific boundary condition of the code (or the corresponding decoding graph) to work while the complementary gap method requires an open boundary condition possessed by (rotated) surface codes. 

\begin{figure}
    \centering
    \includegraphics[width=\columnwidth]{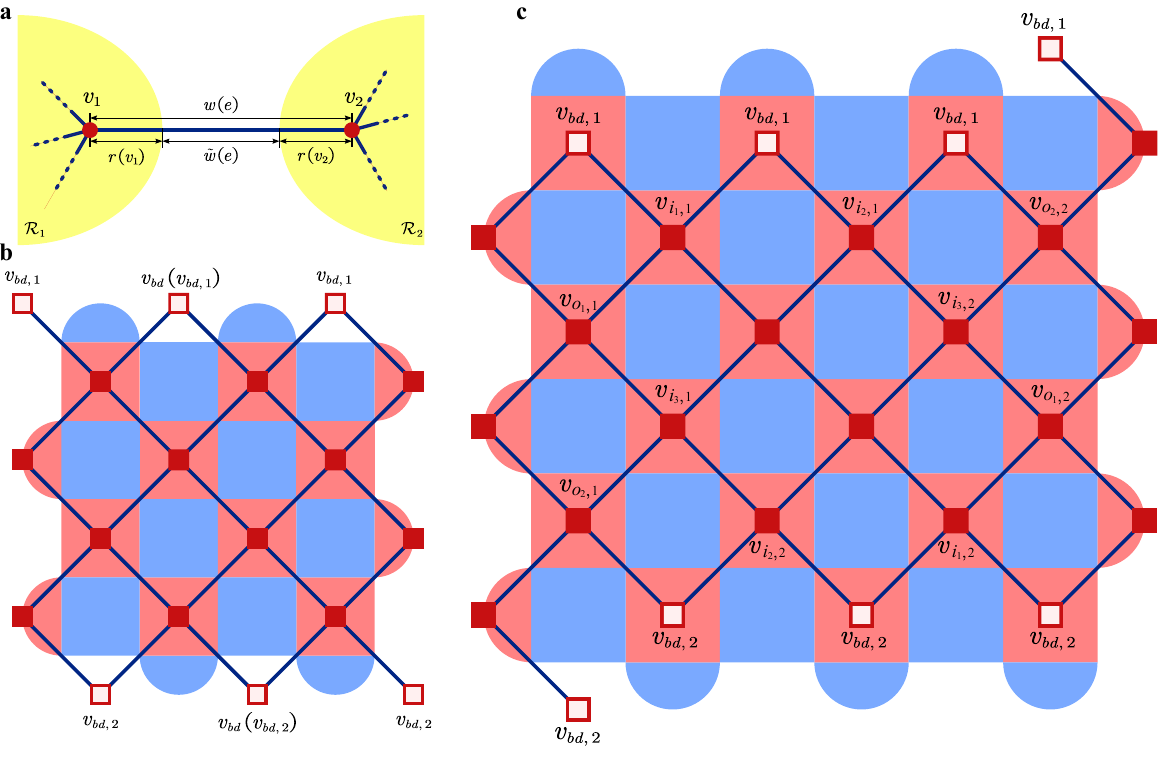}
    \caption{Obtaining soft outputs. (\textbf{a}) Obtaining post-decoding edge weights based on \textit{regions} at the end of a decoding process. Edge $e$ between vertex $v_1$ and $v_2$ with original weight $w(e)$ is reweighted to $\tilde{w}(e)$. \textit{Regions} $\mathcal{R}_1$ and $\mathcal{R}_2$ are colored by yellow. (\textbf{b}) Slice of $\tilde{\mathcal{G}}_{\mathrm{X}}$ on vertices (X detectors) generated at a single SE round. Stabilizers of the rotated surface code are plotted simply to indicate positions of the X detectors. X detectors are marked by smaller dark red squares and errors mechanisms (edges of $\tilde{\mathcal{G}}_{\mathrm{X}}$) by blue lines. A virtual boundary vertex $v_{bd}$ added to $\tilde{\mathcal{G}}_{\mathrm{X}}$ is connected to boundary edges and is then split into two distinct vertices $v_{bd,1}$ and $v_{bd,2}$ (hollow squares) at the upper and lower boundary respectively~\cite{meister_efficient_2024}. (\textbf{c}) Slice of $\tilde{\mathcal{G}}_{\mathrm{MSC},\mathrm{X}}$ on vertices (X detectors) generated immediately after the expansion step of our MSC-3 protocol. Similar to (\textbf{b}), there are two distinct virtual boundary vertices $v_{bd,1}$ and $v_{bd,2}$ (hollow squares) at the upper and lower boundary respectively. Each antipodal pair of vertices, such as $\{v_{i_a,0},v_{i_a,1}\}$ for $a\in\{1,2,3\}$ and $\{v_{o_b,0},v_{o_b,1}\}$ for $b\in\{1,2\}$, originally representing the same vertex (detector) is split into two distinct vertices. }
    \label{fig: soft output}
\end{figure}

Following~\cite{meister_efficient_2024,gidney_how_to_2024}, we first briefly review how the soft output~\cite{meister_efficient_2024} for decoding a rotated surface code memory can be obtained from the sparse blossom~\cite{higgott_sparse_2025} decoding process used in PyMatching. As we can decode on $\mathrm{X}$ detectors and $\mathrm{Z}$ detectors separately,  we restrict to decoding on the X detectors for now. Consider a decoding graph $\mathcal{G}_{\mathrm{X}}(\mathcal{V}_{\mathrm{X}},\mathcal{E},w)$ ($\mathcal{G}_{\mathrm{X}}$ for short) with vertex set $\mathcal{V}_{\mathrm{X}}$ (the set of all $\mathrm{X}$ detectors) and edge set $\mathcal{E}$, in which each edge $e$ may either be a normal edge connecting two vertices in $\mathcal{V}_{\mathrm{X}}$ or a boundary edge connected to only one vertex in $\mathcal{V}_{\mathrm{X}}$ and is assigned a non-negative weight $w(e)\in\mathbb{R}_{\geq 0}$.  The distance $D(u,v)$  between two vertices $u,v\in\mathcal{V}_{\mathrm{X}}$ is defined as the length of the shortest path in $\mathcal{G}_{\mathrm{X}}$ connecting them.  A \textit{combined graph fill region} (or \textit{region} for short) $\mathcal{R}$ is defined as a collection of vertices $V(\mathcal{R})\subset\mathcal{V}_{\mathrm{X}}$ equipped with a non-empty subset (named as the source subset), $S(\mathcal{R})\subset V(\mathcal{R})$, and a local radius map $r_{\mathcal{R}}:V(\mathcal{R})\to\mathbb{R}_{\geq 0}$, such that,
\begin{itemize}
    \item a vertex $u\in V(\mathcal{R})$ iff there exists $s\in S(\mathcal{R})$ such that $D(u,s)\leq r_{\mathcal{R}}(s)$, and, 
    \item for $\forall v\in V(\mathcal{R})$, \[r_{\mathcal{R}}(v)= \min_{s\in S(\mathcal{R})} \left( r_{\mathcal{R}}(s)- D(v,s)\right).\]
\end{itemize}
 Heuristically, the range of $\mathcal{R}$ is characterized by its local radius map. A vertex $v\in V(\mathcal{R})$ is said to be contained in $\mathcal{R}$ with local radius $r_{\mathcal{R}}(v)$ and, in particular, is said to be a source of $\mathcal{R}$ iff $v\in S(\mathcal{R})$.
  (Actually, in~\cite{higgott_sparse_2025}, only \textit{graph fill regions} are defined. The \textit{
  combined graph fill regions} defined here for our own convenience are combinations of \textit{graph fill regions}. More precisely, a \textit{combined graph fill region} is the combination of an \textit{active graph fill region}~\cite{higgott_sparse_2025} with all of its descendants.)  Consider a single decoding shot where we are given a set of triggered detectors $\mathcal{D} \subset \mathcal{V}_{\mathrm{X}}$.  The sparse blossom algorithm~\cite{higgott_sparse_2025} starts by initialize, for each $u\in\mathcal{D}$, a \textit{region} that only contains $u$ whose local radius is initialized to be $0$. Throughout the algorithm, \textit{regions} (their vertices, sources and local radius maps) are varied and they may merge into larger \textit{regions} or shatter into smaller \textit{regions}. Despite the evolution of \textit{regions}, their source subsets are mutually disjoint and only contain vertices in $\mathcal{D}$. Moreover, at any moment, any \textit{region} $\mathcal{R}$ contains an odd number of sources ($|S(\mathcal{R})|\equiv 1\mod 2$).  At the end of the decoding procedure, all regions are either matched to each other or to the boundary and will no longer be varied. Heuristically, one may assign each detection event a $+1$ charge (here charges are measured modulo $2$) and think of each region as a charged region containing an odd number of detection events, so that, at the end of the decoding process, charged regions are `neutralized' by either matching two regions together or matching a region to the boundary, which absorbs charges. Now, given the existing \textit{regions} at the end of the above decoding process, we can define for each edge $e$ a post-decoding weight $\tilde{w}(e)$ by erasing the part of its original weight covered by them. More precisely, for each endpoint $v\in\partial e$ ($\partial e$ denotes the set of endpoints of $e$), define its local radius $r(v)$ as $r_{\mathcal{R}}(v)$ if $v$ is contained in a region $\mathcal{R}$ and $0$ otherwise. Then $\tilde{w}(e)$ is defined as $\tilde{w}(e)=\max (0,w(e)-\sum_{v\in\partial{e}}r(v))$. (See Fig.~\ref{fig: soft output}(\textbf{a}) for an example where both endpoints of $e$ are contained in two different \textit{regions}.) Based on $\mathcal{G}_{\mathrm{X}}$, we define a new graph $\tilde{\mathcal{G}}_{\mathrm{X}}(\mathcal{V}_{\mathrm{X}}\sqcup \{v_{bd}\},\tilde{\mathcal{E}},\tilde{w})$ (or $\tilde{\mathcal{G}}_{\mathrm{X}}$ for short) with an additional virtual boundary vertex $v_{bd}$ and $\tilde{\mathcal{E}}$ is obtained from $\mathcal{E}$ by attaching each boundary edge in $\mathcal{E}$ to $v_{bd}$ (Fig.~\ref{fig: soft output}(\textbf{b})).  Denote the set of loops in $\tilde{\mathcal{G}}_{\mathrm{X}}$ that also flips a logical $\mathrm{X}$ observable as $\mathsf{L}(\tilde{\mathcal{G}}_{\mathrm{X}})$. The \textit{soft output} $\phi$ for the decoding shot above is defined as the minimal path length in $\mathsf{L}(\tilde{\mathcal{G}}_{\mathrm{X}})$~\cite{meister_efficient_2024}: \[\phi_{\mathrm{X}}:=\min_{\ell\in\mathsf{L}(\tilde{\mathcal{G}}_{\mathrm{X}})} \left(\sum_{e\in\ell}\tilde{w}(e)\right).\]
To evaluate $\phi_{\mathrm{X}}$, $v_{bd}$ is split into two distinct vertices $v_{bd,1}$ and $v_{bd,2}$ at the top and bottom boundary respectively (Fig.~\ref{fig: soft output}\textbf{b}). As a result, $\forall \ell\in\mathsf{L}(\tilde{\mathcal{G}})$ is now simply a path connecting $v_{bd,1}$ to $v_{bd,2}$ (Fig.~\ref{fig: soft output}(\textbf{b})). The \textit{soft output} $\phi_{\mathrm{X}}$, now as the distance between $v_{bd,1}$ and $v_{bd,2}$, can be efficiently computed via Dijkstra's algorithm~\cite{meister_efficient_2024}. 
Intuitively, suppose errors in a shot are sparse and do not cluster nor form long error chains, there would also be very few \textit{regions} after the decoding process and they would also have small radii. In this case, as the post-decoding edge weight would still be close to the original edge weight for almost all edges, the soft-output $\phi$ should still have a large value $\sim O(d)$ where $d$ is the code distance. In contrast, proliferation of errors may result in larger \textit{regions} and thus a smaller $\phi$. Thus, higher $\phi$ values indicate stronger decoding confidence. As shown in~\cite{meister_efficient_2024}, discarding all shots with soft outputs below a certain cutoff can significantly suppress the logical error rate and at the same time maintain a low discard rate.

\begin{figure}
    \centering
    \includegraphics[width=\columnwidth]{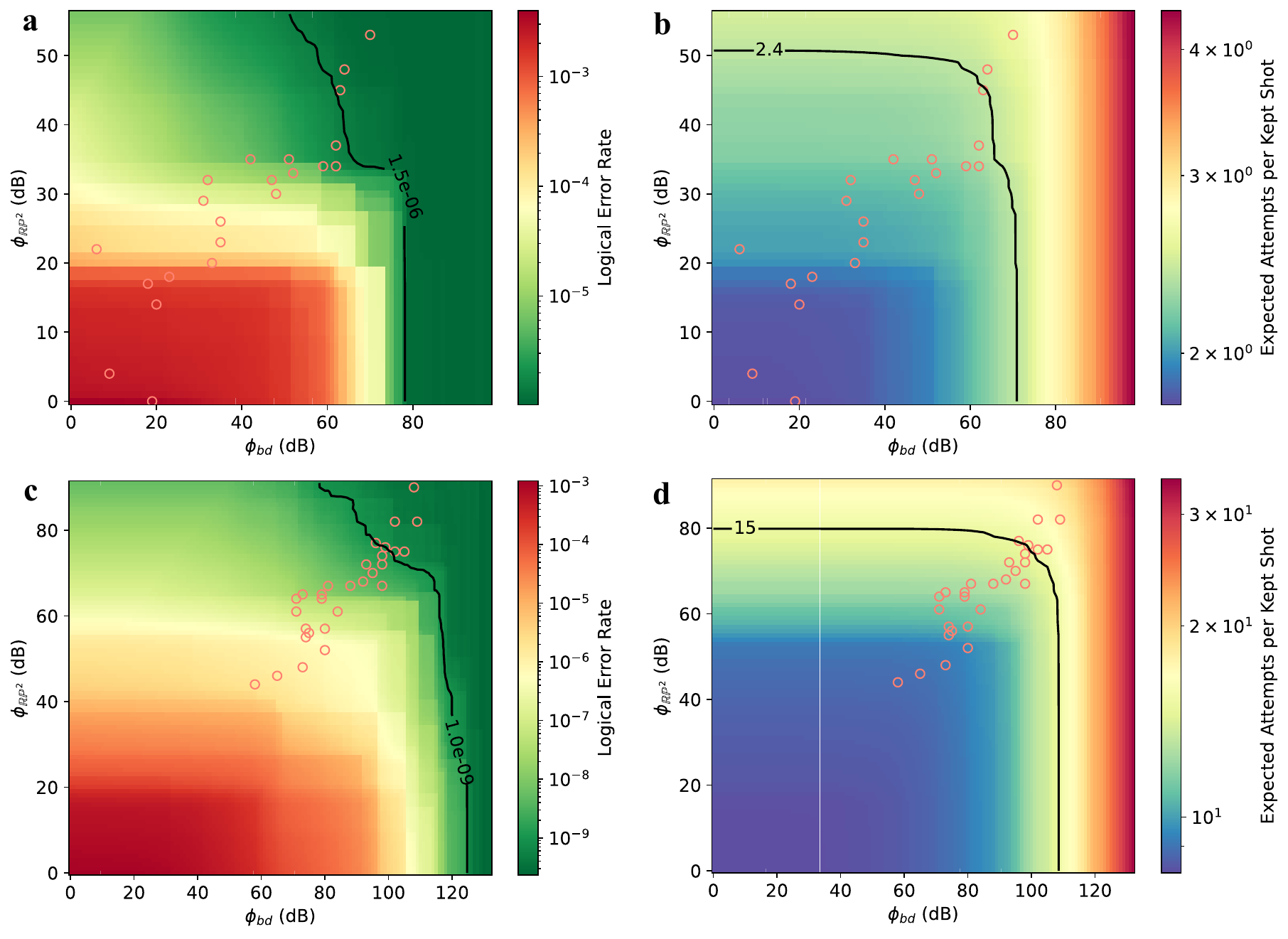}
    \caption{(\textbf{a}-\textbf{b}) Logical error rate and expected attempts per kept shot for our MSC-3 protocol at each soft-output cutoff. (\textbf{c}-\textbf{d}) Logical error rate and expected attempts per kept shot for our MSC-5 protocol at each soft-output cutoff. Cutoffs used in Fig.~\ref{fig: MSC results}(\textbf{b}) are marked by red circles.}
    \label{fig: soft_output cutoffs tradeoffs} 
\end{figure}

We now describe how a \textit{two-dimensional soft output} is extracted for decoding our MSC protocol. We focus on the MSC-3 protocol in the following. (Decoding the MSC-5 protocol is done similarly.) Consider decoding on X detectors first and denote the decoding graph as $\mathcal{G}_{\mathrm{MSC,X}}$. Similar to constructing $\tilde{\mathcal{G}}_{\mathrm{X}}$ from $\mathcal{G}_{\mathrm{
X}}$ above, for each decoding shot, we construct $\tilde{\mathcal{G}}_{\mathrm{MSC},\mathrm{X}}$ from $\mathcal{G}_{\mathrm{MSC},\mathrm{X}}$ by replacing edge weights with post-decoding edge weights and connecting boundary edges to a virtual boundary vertex $v_{bd}$, which is then split into two boundary vertices $v_{bd,1}$ and $v_{bd,2}$ at the top and bottom boundary respectively. Right after the expansion step, the X detectors and the slice of $\tilde{\mathcal{G}}_{\mathrm{MSC,X}}$ on them are illustrated in Fig.~\ref{fig: soft output}(\textbf{c}). Interestingly, as a result of our expansion circuit,  each antipodal pair $\{v_{i_a,0},v_{i_a,1}\}$ ($a\in\{1,2,3\}$) or $\{v_{o_b,0},v_{o_b,1}\}$ ($b\in\{1,2\}$) of vertices in Fig.~\ref{fig: soft output}(\textbf{c}) actually represents a single vertex (or equivalently a single X detector). Here, we also treat each of these antipodal pairs of vertices as distinct vertices. As a result, each path from $v_{i_{a},0}$ to $v_{i_a,1}$ for $\forall a\in\{1,2,3\}$ corresponds to a logical $\mathrm{Z}$ error. Similar to obtaining $\phi_{\mathrm{X}}$ above, we calculate via Dijkstra's algorithm the minimal distance $\phi_{\mathrm{X},a}$ between $v_{i_a,0}$ and $v_{i_a,1}$ for $\forall a\in\{1,2,3\}$ and also the minimum distance $\phi_{\mathrm{X},bd}$ between $v_{bd,0}$ and $v_{bd,1}$. We then define $\phi_{\mathrm{X},\mathbb{RP}^2}:=\min_{a\in\{1,2,3\}}\phi_{\mathrm{X},a}$.  Similarly, for decoding on Z detectors, we can obtain $\phi_{\mathrm{Z},\mathbb{RP}^2}$ and $\phi_{\mathrm{Z},bd}$. Finally, the \textit{two-dimensional soft output} is defined as
\begin{align}
    \phi_{\mathrm{MSC}} & := (\phi_{\mathbb{RP}^2},\phi_{bd}) \nonumber \\
    & := (\min\{\phi_{\mathrm{X},\mathbb{RP}^2},\phi_{\mathrm{Z},\mathbb{RP}^2}\}, \min\{\phi_{\mathrm{X},bd},\phi_{\mathrm{Z},bd}\}).
\end{align}

In our Monte Carlo simulations, $\phi_{\mathrm{MSC}}$ for each shot is measured in dB and is rounded to a tuple of integers. For a given soft-output cutoff $(\phi_1,\phi_2)\in\mathbb{Z}_{\geq0}^2$, we perform final-stage post-selection by only retaining shots whose soft outputs $(\phi_{\mathbb{RP}^2},\phi_{bd})$ satisfy $\phi_{\mathbb{RP}^2}\geq \phi_1$ and $\phi_{bd}\geq \phi_2$. The resulting logical error rate and the expected number of attempts (per kept shot) under each choice of $(\phi_1,\phi_2)$ are shown in Fig.~\ref{fig: soft_output cutoffs tradeoffs}(\textbf{a-b}) for our MSC-3 protocol and in Fig.~\ref{fig: soft_output cutoffs tradeoffs}(\textbf{c-d}) for our MSC-5 protocol. Around a certain discard rate (or an certain expected number of attempts per kept shot), we can select a cutoff with the lowest logical error rate. Such a cutoff is said to have almost optimal local trade-offs (between discard rate and logical error rate). Data points used in Fig.~\ref{fig: MSC results} correspond to cutoffs with almost optimal local trade-offs.  In particular the selected cutoffs used in Fig.~\ref{fig: MSC results}(\textbf{b}) are illustrated by red circles in Fig.~\ref{fig: soft_output cutoffs tradeoffs}.

\section{Discussion and outlook}\label{sec: discussion and outlook}

In this work, we designed a new MSC protocol that, according to Monte Carlo sampling results, requires nearly an order of magnitude lower space-time cost to achieve a logical error rate around $1.5\times 10^{-6}$ (for our MSC-3 version) or $1\times 10^{-9}$ (for our MSC-5 version) compared to the original MSC protocol~\cite{gidney_magic_2024}. The origin of the space-time cost reduction is that we replace the costly ``grafting'' process~\cite{gidney_magic_2024} with a more efficient ES process while also having a cultivation process with comparable efficiency to the one in the original MSC protocol. Furthermore, our MSC protocol would produce $\mathrm{T}$ states on regular rotated surface codes. Alternatively, the original MSC protocol can produce $\mathrm{T}$ states on either `grafted' codes (rotated surface codes deformed at a corner) or on regular color codes. (The latter requires color code patch growth and as of now does not admit efficient decoding with high performance~\cite{gidney_magic_2024}.) It would be interesting to develop fast color code decoders with better performance and then compare the efficiency of the original MSC protocol (that simply grows a smaller color code to a larger color code instead of ``grafting'') with ours and more generally compare the color code FTQC scheme with the surface code one.   

Our cultivation process (before the expand-then-stabilize process) has two extra steps (that allow morphing between $\mathbb{RP}^2$ codes with SRP codes) for each double-checking step and is currently slightly less efficient than the original cultivation process in~\cite{gidney_magic_2024}. However, the $\mathbb{RP}^2$ codes in our protocol have more compact SE circuits than the color codes used in~\cite{gidney_magic_2024}. The simplicity in our SE circuits also allows us to spend only two SE rounds (one SE round less than in~\cite{gidney_magic_2024}) to grow from a distance-3 code to a distance-5 code and stabilize. We expect our cultivation process can be further improved by designing more compact circuits (such as the double-checking circuits), which are now designed to be regular and conceptually simple. Moreover, for our cultivation process, we designed four new codes and made extensive use of their symmetries. We use the self-duality of the SRP-3 code and the SRP-5 code to design logical double-checking circuits on them. We use the fold-duality of the $\mathbb{RP}^2$-3 code and the $\mathbb{RP}^2$-5 code to design fast circuits that transform them to the SRP-3 code and the SRP-5 code respectively and then back. It is possible that MSC protocols based on other fold-dual and self-dual codes can be constructed to have even higher efficiency and/or accuracy than ours. It would also be interesting to explore applications of other types of ZX dualities or other code symmetries in preparing magic states.

\textit{Notes.} The codes and circuits used in this work is made publicly available at~\cite{zihan_culti}. Z. C. would like to thank Nadine Meister for discussions on the \textit{soft-output} method in~\cite{meister_efficient_2024}.

\appendix

\section{Extra details}\label{appsec: extra details}

In this appendix, we describe details of our simulation including:
\begin{itemize}
    \item circuit noise model,
    \item how certain circuits are run concurrently, and
    \item how space-time cost is computed. 
\end{itemize}

\begin{figure}
    \centering
    \includegraphics[width=\columnwidth]{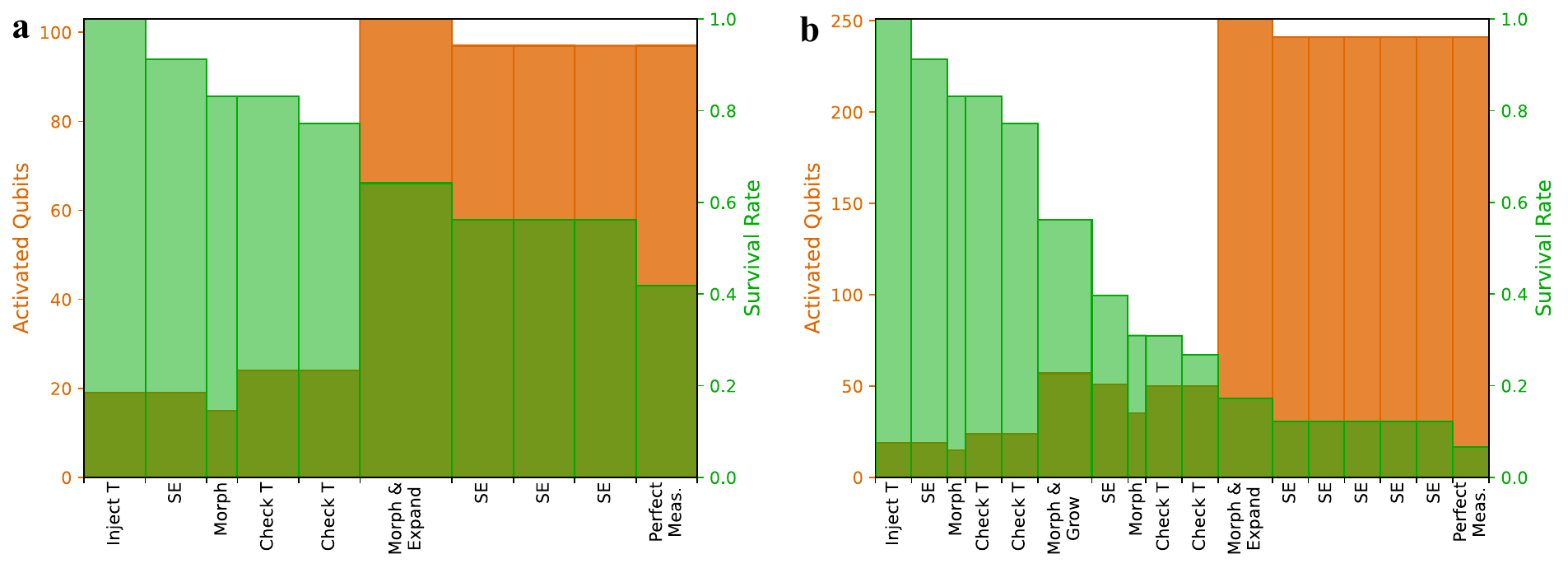}
    \caption{(\textbf{a}) Activated qubits and survival rate at each round of the MSC-3 protocol. (\textbf{b}) Activated qubits and survival rate at each round of the MSC-5 protocol. Each bin corresponds to a single round and has width proportional to the round length.}
    \label{fig: MSC at each round}
\end{figure}

We use the uniform depolarizing circuit noise model as in~\cite{gidney_magic_2024}. More specifically, our circuit error configurations are listed in the following:
\begin{itemize}
\item [i.] each single-qubit gate (including idling) is followed by a single-qubit depolarization channel with noise strength $p$,
\item [ii.] each two-qubit gate is followed by a two-qubit depolarization channel with noise strength $p$,
\item [iii.] initialization in $\pm\mathrm{Z}$ bases is followed by a single-qubit $\mathrm{X}$ error with probability $p$,
\item [iv.] initialization in $\pm\mathrm{X}$ bases is followed by a single-qubit $\mathrm{Z}$ error with probability $p$, and
\item [v.] each single-qubit measurement result is flipped with probability $p$.
\end{itemize}
As mentioned in Sec.~\ref{sec: results}, we sampled on a Clifford variant of our MSC protocol by replacing $\mathrm{T}$ ($\mathrm{T}^{\dagger}$) gates with $\mathrm{S}$ ($\mathrm{S}^{\dagger}$) gates following~\cite{gidney_magic_2024}. The Clifford variant would produce a logical $\mathrm{S}|+\rangle$ state instead of a $\mathrm{T}$ state. We perform accurate state vector simulation~\cite{gidney_magic_2024} on the cultivation process of our MSC-3 protocol under no circuit noise to verify its correctness. In~\cite{gidney_magic_2024}, state vector simulation of cultivation process (with $d_{\text{color}}=3$) under circuit noise is also performed. It is observed in~\cite{gidney_magic_2024} that, at circuit noise rate $p=0.001$, the error rate for cultivating a $\mathrm{T}$ state (estimated via state vector simulation) on a $d_{\text{color}}=3$ color code is about two times larger than the error rate for cultivating an $\mathrm{S}|+\rangle$ state (estimated via Monte Carlo sampling). In our work, however, state vector simulation is not efficient enough to provide enough samples to estimate the logical error rate under circuit noise rate $p=0.001$ since our cultivation process (of the MSC-3 protocol) has a larger spatial footprint ($24$ qubits) compared to the cultivation process with $d_{\text{color}}=3$ (using $15$ qubits) in the original MSC protocol. It would be interesting to adopt more efficient sampling methods~\cite{bravyi_simulation_2019,kissinger_classical_2022} to sample our cultivation process. We leave such implementation to future work. 

As $\mathbb{RP}^2$-growth circuits and expansion circuits both require preparing Bell pairs, we run them concurrently with morphing circuits that transform SRP codes back to $\mathbb{RP}^2$ codes. Take the expansion circuit in our MSC-3 protocol as an example. Morphing an SRP-3 code back to an $\mathbb{RP}^2$-3 code is implemented by running three copies of the decoding circuit (containing $4$ steps as in Fig.~\ref{fig: encoding and decoding circuits of small self dual codes}(\textbf{b})) for the $[[4,2,2]]$ code.   We see that since the third step of the decoding circuit, data qubits of the $\mathbb{RP}^2$-3 are no longer used in the decoding circuit and are available for the SE part of the expansion circuit. Thus, to prepare the Bell pairs, as a preliminary requirement for the the expansion circuit, we first insert an initialization step that initialize all ancilla and data qubits for the expansion circuit between the first step and the second step of the decoding circuit. Then, at the second step of the decoding circuit, we concurrently run 8 extra CNOT gates to prepare the Bell pairs in Fig.~\ref{fig: circuits for patch growth}(\textbf{c}). At the third step of the decoding circuit, we can concurrently run the first round of SE in the expansion circuit. The rest of the SE circuit is then run after the last step the decoding circuit. In this way, an $\mathbb{RP}^2$ growth step or an expansion step is naturally merged with the preceding (backwards) morphing stage. 

Following~\cite{gidney_magic_2024}, we parse our MSC protocol into circuit rounds as in Fig.~\ref{fig: MSC at each round}. These rounds are naturally assigned different round lengths according to the number of resets and measurements contained in each round. More specifically:
\begin{itemize}
\item morphing from an $\mathbb{RP}^2$ code to an SRP code is considered as a single round with length 0.5,
\item  morphing from an SRP code to an $\mathbb{RP}^2$ code and growing (expanding) the $\mathbb{RP}^2$ code are bundled together as a round with length 1.5, and
\item  other rounds all have length 1.
\end{itemize}
 For each round, the expected space-time cost is defined as the product of the number of active qubits for this round, the survival rate at the beginning of this round and the round length. Similar to~\cite{gidney_magic_2024}, the space-time cost for our MSC protocol is defined as the sum of expected space-time cost for each round divided by the survival rate at the end of the protocol.

\bibliography{bibliography}

\end{document}